\DeclareSymbolFont{bbold}{U}{bbold}{m}{n}
\DeclareSymbolFontAlphabet{\mathbbold}{bbold} 
\algnewcommand\algorithmicparfor{\textbf{for}}
\algnewcommand\algorithmicpardo{}
\algnewcommand\algorithmicendparfor{}
\newcolumntype{C}[1]{>{\centering\let\newline\\\arraybackslash\hspace{0pt}}m{#1}}
\newcommand{\argmax}{\operatorname*{argmax}}
\newcommand{\distas}[1]{\mathbin{\overset{#1}{\kern\z@\sim}}}%
\newsavebox{\mybox}\newsavebox{\mysim}
\newtheorem{theorem}{Theorem}
\newcommand{\distras}[1]{%
  \savebox{\mybox}{\hbox{\kern3pt$\scriptstyle#1$\kern3pt}}%
  \savebox{\mysim}{\hbox{$\sim$}}%
  \mathbin{\overset{#1}{\kern\z@\resizebox{\wd\mybox}{\ht\mysim}{$\sim$}}}%
}
\newcommand{\blind}{1}
\begin{document}

\def\spacingset#1{\renewcommand{\baselinestretch}%
{#1}\small\normalsize} \spacingset{1.5}


\if1\blind
{
 \centering{\bf\Large Active Learning with Adaptive Non-Stationary Kernel for Continuous-Fidelity Surrogate Models
 }\\
 
  \vspace{0.3in}
  \centering{Romain Boutelet and Chih-Li Sung\footnote{These authors gratefully acknowledge funding from NSF DMS 2338018.}\vspace{0.1in}\\
        Michigan State University\\
        }
    \date{\vspace{-7ex}}
} \fi

\if0\blind
{
  \bigskip
  \bigskip
  \bigskip
    \begin{center}
    {\Large\bf Active Learning with Adaptive Non-Stationary Kernel for Continuous-Fidelity Surrogate Models}
    \end{center}
  \medskip
} \fi

\bigskip
\begin{abstract}
    Simulating complex physical processes across a domain of input parameters can be very computationally expensive. Multi-fidelity surrogate modeling can resolve this issue by integrating cheaper simulations with the expensive ones in order to obtain better predictions at a reasonable cost. We are specifically interested in computer experiments where real-valued fidelity parameters determine the fidelity of the numerical output, such as finite element methods. In these cases, integrating this fidelity parameter in the analysis enables us to 
    make inference on fidelity levels that have not been observed yet. Such models have been developed, and we propose a new adaptive non-stationary kernel function which more accurately reflects the behavior of computer simulation outputs. In addition, we develop an active learning strategy based on the integrated mean squared prediction error (IMSPE) to identify the best design points across input parameters and fidelity parameters, while taking into account the computational cost associated with the fidelity parameters. We illustrate this methodology through numerical examples and applications to finite element methods. An \textsf{R} package for the proposed methodology is provided in an open repository.
\end{abstract}

\noindent%
{\it Keywords}: Sequential design; Multi-fidelity model; Uncertainty quantification; Finite element methods; Gaussian process.
\vfill

\newpage
\spacingset{1} 

\section{Introduction}

Computer simulations are pivotal across diverse scientific fields ranging from plant biology to aerospace engineering. High-fidelity simulations, while highly accurate, often require substantial computational resources and time. For instance, a large-eddy simulation of a single injector design in \cite{mak2018efficient} can take six days of computation, even when parallelized over 200 CPU cores.

To address these computational challenges, multi-fidelity simulations have emerged as a practical solution. These simulations leverage a combination of low-fidelity and high-fidelity models. Low-fidelity simulations are computationally cheaper but less accurate, whereas high-fidelity simulations offer greater accuracy at a higher computational cost. A statistical model, or \textit{emulator}, integrates data from both fidelity levels to generate accurate predictions efficiently. When simulations are costly to run, emulators become indispensable tools, particularly for exploring a wide range of input variables. The foundational work by \cite{kennedy2000predicting} has significantly impacted this approach by modeling a sequence of computer simulations from lowest to highest fidelity using a series of stationary Gaussian process (GP) models linked by a linear auto-regressive framework. Subsequent developments inspired by this approach include \cite{qian2006building}, \cite{le2013bayesian}, \cite{le2014recursive}, \cite{qian2008bayesian}, \cite{perdikaris2017nonlinear}, \cite{ji2022graphica}, and \cite{heo2023active}, among others.

Finite element methods (FEM) exemplify the need for multi-fidelity approaches \citep{brenner2007fem}. FEM are widely used to simulate real-world phenomena such as soil erosion and climate change, involving fidelity parameters like mesh density that control numerical accuracy and computational cost. Coarser meshes in FEM are less costly but less accurate, whereas finer meshes provide higher accuracy at increased computational expense. Instead of solely relying on the ordering of the levels, \cite{tuo2014surrogate} first incorporated the fidelity parameters as scalars in the statistical model, introducing a \textit{non-stationary} GP model consistent with numerical analysis results, and enabling the prediction of the exact solution. This model has been further developed for conglomerate multi-fidelity emulation \citep{ji2022multi}.

Despite these advancements, important gaps remain in the \textit{design} of multi-fidelity experiments with continuous fidelity parameters for cost-efficient emulation. Notable exceptions include the recent work by \cite{shaowu2023design}, which introduced one-shot, batch designs for multi-fidelity finite element simulations based on the model of \cite{tuo2014surrogate}, and the work by \cite{sung2022stacking}, which proposed sequential batch designs based on the auto-regressive model of \cite{kennedy2000predicting}.

This study addresses these critical gaps through two critical contributions. First, we propose a novel, \textit{adaptive} non-stationary kernel function, inspired by fractional Brownian motion, to enhance the surrogate model. The increased flexibility in this new kernel accommodates varying degrees of correlation in simulation increments along the fidelity parameters, making it more well suited particularly  for multi-fidelity finite element simulations. This enhancement not only improves prediction performance, but also ensures better consistency with numerical analysis results, positioning \cite{tuo2014surrogate}'s model as a special case within this more comprehensive framework.

Second, we introduce an innovative active learning (AL) framework. In contrast to the batch designs used by \cite{shaowu2023design} and \cite{sung2022stacking}, active learning, also known as sequential design, selects new design points one at a time based on optimizing a criterion that evaluates the model's predictive performance. This iterative, step-by-step method is not only more practical but often proves more effective than static batch designs \citep{gramacy2020surrogates}, while also taking full advantage of the underlying structure of the output expressed through our adaptive kernel. We employ the integrated mean squared prediction error (IMSPE) as the criterion for AL, taking simulation costs into account. Inspired by \cite{binois2019replication}, we show that solving this sequential design problem efficiently involves deriving a closed-form expression for IMSPE and its derivatives, enabling fast numerical optimization. In addition, we explore the challenge of selecting an adequate initial design for the active learning approach, already explored by \cite{song2025efficient} in the context of dimension reduction, highlighting the need for a balance between space-filling properties and robust parameter estimation.

Recent advances in finite element methods have introduced the statistical finite element method (statFEM), a Bayesian framework for uncertainty quantification in numerical solutions (see, e.g., \cite{papandreou2023theoretical, girolami2021statistical, akyildiz2022statistical}). While statFEM employs a hierarchical model to probabilistically account for discretization errors and model misspecification, our focus is on developing an efficient surrogate modeling and active learning framework for simulations with varying mesh configurations, offering a computationally inexpensive alternative to costly finite element simulations.

The rest of the paper is organized as follows: Section 2 introduces the proposed emulator with an adaptive non-stationary kernel. Section 3 outlines the active learning approach, detailing the  IMSPE with a focus on sequential applications, computational improvements, and initial designs. Sections 4 and 5 present numerical and real data studies, respectively. Finally, the paper concludes in Section 6.

\section{Multi-Fidelity Modeling with Continuous Fidelity Parameters}

We first present the surrogate model used in this paper in Section \ref{sec:surrogate_model}, and we introduce our novel kernel function adapted to the FEM simulations in Section \ref{sec:LBM_kernel}. We finally give additional details on inference in Sections \ref{sec:inference}.

\subsection{Non-Stationary GP Surrogate Model}\label{sec:surrogate_model}

Similarly as in the general framework of surrogate modeling for computer experiments, we study the computer (scalar) output $y(\mathbf{x},\mathbf{t}) \in \mathbb{R}$ as dependent on the computer input parameters $\mathbf{x} \in \mathcal{X}\subset \mathbb{R}^d$, which depict varying parameters that one wants to explore as part of the computer model. Unlike most surrogate models, our framework assumes that the computer output $y(\mathbf{x},\mathbf{t})$ also depends on a continuous fidelity parameter $\mathbf{t} \in \mathcal{T} \subset \mathbb{R}^m$, which controls the accuracy of the output. In the case of finite element methods, the fidelity parameter $\mathbf{t}$ would be a vector of global discretization parameters that control the resolution of the finite element solver, where smaller values of $\mathbf{t}$ correspond to finer discretizations. The input $\mathbf{x}$ represents model parameters of the physical system (not spatial locations of the finite element mesh) and is independent of $\mathbf{t}$.
The limiting case $\mathbf{t}=\mathbf{0}$ corresponds to the idealized exact solution with zero discretization error. 

We adopt the general non-stationary model introduced in \cite{tuo2014surrogate}. The approximate solution from the computer simulation $y(\mathbf{x},\mathbf{t})$ is represented as the sum of the \textit{exact/true solution} to this physical model, denoted by $\varphi(\mathbf{x}):=y(\mathbf{x},\mathbf{0})$, and the error function $\delta(\mathbf{x}, \mathbf{t})$ with respect to the fidelity parameter $\mathbf{t}$:
\begin{equation}
    y(\mathbf{x},\mathbf{t}) = \varphi(\mathbf{x}) + \delta(\mathbf{x},\mathbf{t}),~~~ (\mathbf{x},\mathbf{t}) \in \mathcal{X}\times \mathcal{T}.
\end{equation}

The unknown deterministic functions $\varphi(\mathbf{x})$ and $\delta(\mathbf{x},\mathbf{t})$ 
are modeled using Gaussian process (GP) priors under a Bayesian framework, a probabilistic framework often used to approximate expensive computer simulations  \citep{gramacy2020surrogates}. 
A GP $Y(\mathbf{x})$ is uniquely determined by a mean function $\mu(\mathbf{x})$ and a positive-definite covariance function $K(\mathbf{x}_1,\mathbf{x}_2)$. More specifically, a GP is said to be stationary if $\mu$ is constant, and $K$ can be expressed as a function of the difference between $\mathbf{x}_1$ and $\mathbf{x}_2$, i.e., $K(\mathbf{x}_1,\mathbf{x}_2) = \sigma^2 R(\mathbf{x}_1-\mathbf{x}_2)$, where $\sigma^2>0$ is the variance and $R$ is the corresponding correlation function. For instance, the Gaussian separable correlation function is $\displaystyle R_{\boldsymbol{\phi}}(\mathbf{h}) = \exp\Big(-\sum^d_{i=1}\phi_i^2 h_i^2\Big)$, where $\boldsymbol{\phi}^2 = (\phi^2_1,\ldots,\phi^2_d)$ with $\phi_i>0$ is the vector of correlation parameters along each dimension. The Mat\'ern family of kernels \citep{stein1999interpolation} is another commonly used choice, where a smoothness parameter $\nu$ controls the differentiability of the sample paths. The Gaussian kernel arises as the limiting case when $\nu \to \infty$. In practice, half-integer values of $\nu$ are often used for computational simplicity. For example, when $\nu=1.5$, the Mat\'ern correlation function is
$R_{\boldsymbol{\phi}}(\mathbf{h})=\prod^d_{i=1}\Big( 1+\sqrt{3}\phi_i|h_i| \Big) \mathrm{e}^{-\sqrt{3}\phi_i|h_i|}.$ While the framework is developed from a Bayesian perspective through GP priors, inference is conducted using plug-in estimates for the hyper-parameters, such as $\sigma^2$ and $\phi_i$. Details of the estimation procedure are provided in Section~\ref{sec:inference}.

\par 

We assume that the true solution $\varphi(\mathbf{x})$ is a realization of a non-stationary GP with a mean function $\mu_\varphi(\mathbf{x})  = \mathbf{f}_\varphi(\mathbf{x})^T\boldsymbol{\beta}_1$, where $\mathbf{f}_\varphi(\mathbf{x})$ are known regression functions, and covariance function 
\begin{equation*}
    K_\varphi(\mathbf{x}_1, \mathbf{x}_2) = \sigma^2R_{\boldsymbol{\phi}_1}(\mathbf{x}_1-\mathbf{x}_2).
\end{equation*}
Similarly, we assume that the error function $\delta(\mathbf{x},\mathbf{t})$ is a realization of a non-stationary GP with a mean function $\mu_\delta(\mathbf{x}, \mathbf{t}) = \mathbf{f}_\delta(\mathbf{x},\mathbf{t})^T\boldsymbol{\beta}_2$, where $\mathbf{f}_\delta(\mathbf{x},\mathbf{t})$ are known regression functions, and a separable covariance function in regards to the input parameter and fidelity parameter:
 \begin{equation*}
     K_\delta((\mathbf{x}_1,\mathbf{t}_1),(\mathbf{x}_2,\mathbf{t}_2)) = \sigma^2R_{\boldsymbol{\phi}_2}(\mathbf{x}_1-\mathbf{x}_2) K_{\gamma}(\mathbf{t}_1,\mathbf{t}_2), 
 \end{equation*} 
where $K_{\gamma}(\mathbf{t}_1,\mathbf{t}_2)$ defines the dependency structure between different fidelity parameter values.

The response variable $y(\mathbf{x},\mathbf{t})$ is then a realization of a GP $\{Y(\mathbf{x},\mathbf{t}):(\mathbf{x}, \mathbf{t}) \in  \mathcal{X} \times \mathcal{T}\}$ with the mean $\mu(\mathbf{x}) = \mu_\varphi(\mathbf{x}) + \mu_\delta(\mathbf{x},\mathbf{t}) = \mathbf{f}(\mathbf{x}, \mathbf{t})^T\boldsymbol{\beta}$, where $\mathbf{f}(\mathbf{x}, \mathbf{t})^T = (\mathbf{f}_\varphi(\mathbf{x})^T, \mathbf{f}_\delta(\mathbf{x}, \mathbf{t})^T)$ and  $\boldsymbol{\beta} = (\boldsymbol{\beta}_1^T, \boldsymbol{\beta}_2^T)^T \in \mathbb{R}^p$, and the covariance function
\begin{equation}\label{eq:coveq}
    K((\mathbf{x}_1,\mathbf{t}_1),(\mathbf{x}_2,\mathbf{t}_2)) = \sigma^2  \Big(R_{\boldsymbol{\phi}_1}(\mathbf{x}_1-\mathbf{x}_2) + R_{\boldsymbol{\phi}_2}(\mathbf{x}_1-\mathbf{x}_2)K_{\gamma}(\mathbf{t}_1,\mathbf{t}_2)\Big) .
\end{equation}

 As outlined in \cite{tuo2014surrogate}, the convergence of FEM methods are well studied and the error function $\delta$ can be controlled. In particular, if we consider mesh size as a single fidelity parameter and with some assumptions, then the error $\delta$ can be bounded by 
 \begin{equation}\label{eq:L2error}
     \Vert \delta \Vert_{L_2} \leq c ~ t^2 \Vert \varphi ''\Vert_{L_2},
 \end{equation}
 where $t$ is the mesh size and $c$ is a constant independent of $t$, and $\Vert \varphi ''\Vert_{L_2} = \Big(\sum_{i,j} \Vert \frac{\partial^2\varphi}{\partial x_i \partial x_j}\Vert \Big)^{\frac12}$ \citep{brenner2007fem}. This convergence of the error function can be leveraged through inducing non-stationarity in the surrogate model, by including some assumptions to ensure that the output of the computer simulation gets closer to the exact solution at a certain rate as $t \to 0$. We therefore need  that $\mu_\delta(\mathbf{x}, t) \to 0$ as $t \to 0$ and $K_{\gamma}(t_1, t_2) \to 0$ as $\min(t_1, t_2) \to 0$, ideally with a convergence rate that can be controlled. Specifically, for an error function whose convergence rate is $r$, we need that $Var(\delta(\mathbf{x},t)) = \mathcal{O}(t^l)$, where $l = 2r$ is directly determined by the convergence rate $r$. The parameter $l$ can be chosen based on upper bounds for $\delta$. For example, \eqref{eq:L2error} implies $r=2$ and thus $l=4$ for the $L_2$ norm. Further details can be found in \cite{tuo2014surrogate}.

 When considering the case of a multi-dimensional $\mathbf{t}$, this  constraint still needs to be taken into account. Moreover, as noted by \citet{ji2022multi}, we also need to ensure that if any fidelity parameter $t_j$ of $\mathbf{t}$ is non-zero, then $\lim_{\mathbf{t}_{-j}\to \mathbf{0}} P(\delta(\mathbf{x}, \mathbf{t}) = 0) = 0$, meaning that if any fidelity parameter is non-zero, then the simulation error must always have a non-negligible error, even as the other fidelity parameters tend to zero. 

From these constraints, we deduce that our model must have a non-stationary covariance function that reflects the convergence of the error function, which will be induced through the kernel function $K_{\gamma}$, introduced in the next subsection.   

\subsection{Novel Adaptive Kernel Function}\label{sec:LBM_kernel}

Given the necessity of providing the error function $\delta$ an appropriate non-stationary kernel function, we introduce a novel adaptive kernel function, aimed at capturing the behavior of the output along the fidelity parameter space. 
Specifically, we propose to use
\begin{equation} \label{eq:LB_kernel}
    K_{\gamma}(\mathbf{t}_1,\mathbf{t}_2)= \frac12 \left[ \left(\sum_{j=1}^m (a_j t_{1,j}^{l_j})^{\frac{1}{\gamma}}\right)^\gamma + \left(\sum_{j=1}^m (a_j t_{2,j}^{l_j})^{\frac{1}{\gamma}}\right)^\gamma - \left(\sum_{j=1}^m a_j^{\frac{1}{\gamma}}\left(t_{1,j}^{\frac{l_j}{2\gamma}}-t_{2,j}^{\frac{l_j}{2\gamma}}\right)^2\right)^\gamma \right],
\end{equation}
where $\gamma \in (0,1)$ controls the short range behavior of the response, $\mathbf{a} = (a_1, \cdots, a_m)$ is the scale associated with each fidelity parameter, and $\mathbf{l} = (l_1, \cdots, l_m)$ is a predefined parameter governing the convergence rate of the  error $\delta$ as described in the previous subsection. This kernel is an extension of the Lifted Brownian (LB) kriging model, introduced by \cite{plumlee2017lifted}, which defines a positive definite covariance function and is able to capture a diversity of raggedness levels through the parameter $\gamma \in (0,1)$,  thereby controlling the correlation between increments. This ability is particularly important in the study of computer experiments, especially when one is interested in the behavior of the computer output across different fidelity parameters.  We can observe in Figure \ref{fig:inc_plot} that the behavior of the output in the FEM simulations of Section \ref{sec:casestudy} can change quite dramatically depending on the mesh geometry and on the function used to define the response of interest. The resulting successive increments of the response of interest, defined as $y(t+h) - y(t)$ for a certain lag $h$, appear negatively correlated for the jet blade case study (left panel), while in the Poisson equation case study they appear somewhat uncorrelated or negatively correlated for the ``Maximum'' response (middle panel) and  positively correlated  for the ``Average'' response (right panel). This variety of sample path behavior along the fidelity parameter space gives ground for the adaptive kernel introduce in this work.

\begin{figure}[h]
    \centering
    \includegraphics[width = \linewidth]{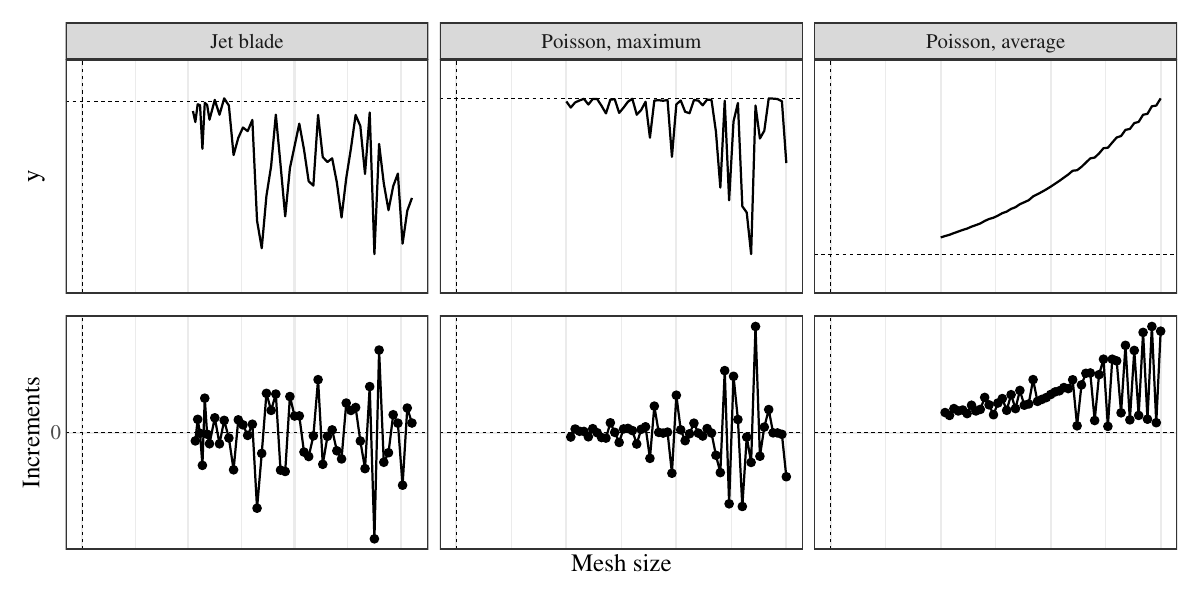}
    \caption{Results of FEM simulations at the one input point for a sequence of mesh sizes on the Poisson (middle and right) and jet blade (left panel) case studies; the 3 columns correspond to the 3 different case study problems at hand. The top row shows the response of interests $y$ depending on the mesh size with the horizontal dot line showing the convergence value and the vertical dot line at mesh size $t = 0$, while the bottom row shows the increments depending on the mesh size.}
    \label{fig:inc_plot}
\end{figure}

Previous forms of this kernel include $K_{\gamma}(t_1, t_2) = \min(t_1^l,t_2^l)$, used by \citet{tuo2014surrogate} for models with a single fidelity parameter, a special case of our kernel with $\gamma = 0.5$, referred to as Brownian motion (BM); in the case of a multi-dimensional fidelity parameter, \cite{ji2022multi} proposed $K_{\gamma}(\mathbf{t}_1, \mathbf{t}_2) = e^{-\sum_{j=1}^m\theta_j^2 (t_{1,j}-t_{2,j})^2}-e^{-\sum_{j=1}^m\theta_j^2 t_{1,j}^2}-e^{-\sum_{j=1}^m\theta_j^2 t_{2,j}^2}+1$ .

This new kernel possess several advantages, both in terms of theoretical grounding and in its ability to reproduce sample path behaviors that more closely reflect observations from computer simulations. First,  as $\mathbf{t} \to \mathbf{0}$, we have $K_{\gamma}(\mathbf{t}_1, \mathbf{t}_2) \to 0$. Moreover, the variance associated with the fidelity parameter satisfies $K_{\gamma}(\mathbf{t}, \mathbf{t}) = \Vert\mathbf{a}\odot\mathbf{t}^{\mathbf{l}}\Vert_{\frac{1}{\gamma}} \geq \max_j(a_j t_j^{l_j})$, where $\Vert . \Vert_p$ indicates the $p$-norm on vectors. This lower bound ensures that the convergence rate is consistent with the error bounds discussed in Section~\ref{sec:surrogate_model}, and further guarantees that if any component of the fidelity parameter is non-zero, then $P(\delta(\mathbf{x}, \mathbf{t}) = 0) = 0$ (see the Supplementary Materials for the formal proof). \par

\subsection{Statistical Inference}\label{sec:inference}

Suppose we have conducted $n$ computer experiments at design locations $\mathbf{X}_n = (\mathbf{x}_1^T,\cdots,\mathbf{x}_n^T)$ and fidelity parameters $\mathbf{T}_n = (\mathbf{t}_1,\cdots,\mathbf{t}_n)^T$, with the corresponding outputs $\mathbf{y}_n = (y_1,\cdots,y_n)^T$, i.e., $y_i=y(\mathbf{x}_i,\mathbf{t}_i)$.\par

The hyper-parameters $\{\boldsymbol{\beta},\sigma^2,\boldsymbol{\phi}^2_1,\boldsymbol{\phi}^2_2,\mathbf{a},\gamma\}$ are estimated using the Maximum Likelihood Estimation (MLE). First, we express the covariance function in (\ref{eq:coveq}) as $\displaystyle K((\mathbf{x}_1,\mathbf{t}_1),(\mathbf{x}_2,\mathbf{t}_2)) =  \sigma^2  K_0((\mathbf{x}_1,\mathbf{t}_1),(\mathbf{x}_2,\mathbf{t}_2))$. Using the Restricted Maximum Likelihood Estimation (RMLE), we obtain the profile likelihood for $\boldsymbol{\beta}$ and $\sigma^2$, and then estimate the remaining hyper-parameters \citep{stein1999interpolation,santner2018design}. The profile likelihood to be maximized is 
\begin{equation*}
    l(\boldsymbol{\theta}; \mathbf{y}_n) =\text{constant} - \frac{n-p}{2} \log \Big\{ \mathbf{Z}^T\Big(\mathbf{K}_{0,n}^{-1} - \mathbf{K}_{0,n}^{-1}\mathbf{F}_n\mathbf{P}_n^{-1} \mathbf{F}^T_n \mathbf{K}_{0,n}^{-1} \Big) \mathbf{Z} \Big\} 
    -\frac{1}{2} \log \vert \mathbf{K}_{0,n}\vert  -\frac{1}{2} \log \vert \mathbf{P}_n\vert
\end{equation*}
 where $\mathbf{F}_n = \big(\mathbf{f}(\mathbf{x}_i, t_i)^T)_{1\leq i \leq n}$, $\mathbf{Z} = (\mathbf{I} - \mathbf{F}_n(\mathbf{F}_n^T\mathbf{F}_n)^{-1}\mathbf{F}_n^T)\mathbf{y}_n$, $\mathbf{K}_{0,n} = \left(K_0((\mathbf{x}_i,\mathbf{t}_i),(\mathbf{x}_j,\mathbf{t}_j))\right)_{i,j}$ and $\mathbf{P}_n = \mathbf{F}_n^T \mathbf{K}_{0,n}^{-1}\mathbf{F}_n$, and $p$ is the number of components in $\boldsymbol{\beta}$. The hyper-parameters $\{\boldsymbol{\phi}^2_1,\boldsymbol{\phi}^2_2,\mathbf{a},\gamma\}$ can then be estimated by maximizing the likelihood  $l(\boldsymbol{\theta}; \mathbf{y}_n)$.
The resulting MLEs for $\boldsymbol{\beta}$ and $\sigma^2$ are $\hat{\boldsymbol{\beta}} =(\mathbf{F}_n^T\mathbf{K}_{0,n}^{-1}\mathbf{F}_n)^{-1}\mathbf{F}_n^T\mathbf{K}_{0,n}^{-1}\mathbf{y}_n$ and $ \hat{\sigma}^2 = \frac{(\mathbf{y}_n-\mathbf{F}_n\hat{\boldsymbol{\beta}})^T\mathbf{K}_{0,n}^{-1}(\mathbf{y}_n-\mathbf{F}_n\hat{\boldsymbol{\beta}})}{n-p}$, where $\mathbf{K}_{0,n}$ is calculated with the plug-in estimates $\{\hat{\boldsymbol{\phi}}^2_1,\hat{\boldsymbol{\phi}}^2_2,\hat{\mathbf{a}},\hat{\gamma}\}$.\par

The maximization of the log-likelihood can be performed using a quasi-Newton optimization method \citep{byrd1995limited} using the closed-form expressions for the gradient of the log-likelihood, which we have included in the Supplementary Materials.
Compared to the fully Bayesian approach in \cite{tuo2014surrogate}, our method is more computationally efficient and avoids the need for informed priors.

By the property of conditional normal distributions, the posterior distribution of $y(\mathbf{x},\mathbf{t})$ given the observations $\mathbf{y}_n$ follows a normal distribution with predictive mean and variance:
\begin{align}
    \mu_n(\mathbf{x},\mathbf{t}) & = \mathbf{f}(\mathbf{x},\mathbf{t})^T \boldsymbol{\beta} + \mathbf{k}_n(\mathbf{x},\mathbf{t})^T \mathbf{K}_n^{-1}(\mathbf{y}_n - \mathbf{F}_n \boldsymbol{\beta}),  \nonumber\\  
    \sigma_n^2(\mathbf{x},\mathbf{t}) & = K(\mathbf{x},\mathbf{x},\mathbf{t},\mathbf{t}) - \mathbf{k}_n(\mathbf{x},\mathbf{t})^T \mathbf{K}_n^{-1} \mathbf{k}_n(\mathbf{x},\mathbf{t}) + \boldsymbol{\gamma}_n(\mathbf{x},\mathbf{t})^T(\mathbf{F}_n^T\mathbf{K}_n^{-1}\mathbf{F}_n)^{-1}\boldsymbol{\gamma}_n(\mathbf{x},\mathbf{t}), \label{eq:pred_var}
\end{align}
where $\mathbf{k}_n(\mathbf{x},\mathbf{t})^T = \big(K((\mathbf{x},\mathbf{t}),(\mathbf{x}_i,\mathbf{t}_i))\big)_{1\leq i \leq n}$, $\mathbf{K}_n = \big(K((\mathbf{x}_i,\mathbf{t}_i),(\mathbf{x}_j,\mathbf{t}_j))\big)_{i,j}$, and $\boldsymbol{\gamma}_n(\mathbf{x},\mathbf{t}) = \big(\mathbf{f}(\mathbf{x},\mathbf{t}) - \mathbf{F}_n^T\mathbf{K}_n^{-1}\mathbf{k}_n(\mathbf{x},\mathbf{t})\big)$. The  hyper-parameters in the posterior distribution can be plugged in by their  estimates.

From kriging theory \citep{cressie2015statistics}, this posterior predictive distribution is the Best Linear Unbiased Prediction (BLUP) for $y(\mathbf{x}, \mathbf{t})$. Unlike other multi-fidelity surrogate models that rely on auto-correlation models \citep{kennedy2000predicting, le2015cokriging, heo2023active}, our surrogate model is capable of extrapolation, making predictions for the true solution at $\mathbf{t} = \mathbf{0}$, while providing uncertainty quantification with the predictions.
 
\section{Active Learning}\label{sec:active_learning}

In the context of computer experiments, where the need to carefully manage limited computational resources intersects with the flexibility to choose design points without constraints, active learning methods have naturally gained popularity, particularly when combined with GP surrogate models \citep{rasmussen2006gaussian, santner2018design,gramacy2020surrogates}. In this work, we focus specifically on the objective of minimizing predictive error of the true solution. 


\subsection{Cost Adjusted IMSPE Reduction}\label{sec:criterion}

We employ the Integrated Mean Squared Prediction Error (IMSPE) \citep{gramacy2020surrogates} as the foundation of our active learning criterion. 
IMSPE provides an intuitive criterion for sequentially selecting design points that directly targets global predictive accuracy of the GP surrogate. Specifically, the IMSPE measures the total posterior uncertainty in our prediction by integrating the Mean Squared Prediction Error (MSPE) across the entire domain space. Our aim being to predict the true solution $y(\mathbf{x}, \mathbf{0})$, we integrate the MSPE associated with the posterior prediction $\mu_n(\mathbf{x}, \mathbf{0})$ over the input space $\mathcal{X}$ only, considering that the prediction over the rest of the domain $\mathcal{T}$ is of little interest. Since our model specification implies that the prediction is unbiased, the MSPE is simply equal to the predictive variance $\sigma_n^2(\mathbf{x},\mathbf{0})$ from (\ref{eq:pred_var}). Given this, we define the IMSPE as
\begin{equation}
    \mathrm{IMSPE}(\mathbf{X}_n, \mathbf{T}_n) = \int_{\mathbf{x}\in \mathcal{X}} \sigma_n^2(\mathbf{x},\mathbf{0}) \text{d}\mathbf{x} := I_n.
\end{equation}

Given the current design $(\mathbf{X}_n, \mathbf{T}_n)$, our active learning objective is to find the next best design location  $(\mathbf{x}_{n+1}, \mathbf{t}_{n+1})$ by minimizing $I_{n+1}(\mathbf{x}_{n+1}, \mathbf{t}_{n+1}) := \mathrm{IMSPE}(\mathbf{X}_{n+1}, \mathbf{T}_{n+1})$, where $\mathbf{X}_{n+1} = \{\mathbf{x}_1, \cdots,\mathbf{x}_n,\mathbf{x}_{n+1}\}$ and $\mathbf{T}_{n+1} = \{\mathbf{t}_1, \cdots, \mathbf{t}_n, \mathbf{t}_{n+1}\}$ are the combined set of $n$ current design points together with the new input location for the input parameter and fidelity parameter. For any candidate point $(\tilde{\mathbf{x}}, \tilde{\mathbf{t}})$, the following theorem shows that the IMSPE $I_{n+1}(\tilde{\mathbf{x}}, \tilde{\mathbf{t}})$, can be written in a sequential manner, facilitating efficient computation when applied to an active learning method.

\begin{theorem}\label{theorem:imspe_seq}
The IMSPE associated with an additional design point $(\tilde{\mathbf{x}}, \tilde{\mathbf{t}})$ given the current design $(\mathbf{X}_n, \mathbf{T}_n)$ can be written in an iterative form as
    \begin{equation} I_{n+1}(\tilde{\mathbf{x}}, \tilde{\mathbf{t}}) = I_n - R_{n+1}(\tilde{\mathbf{x}}, \tilde{\mathbf{t}}),\end{equation}
    where $R_{n+1}(\tilde{\mathbf{x}}, \tilde{\mathbf{t}})$, the IMSPE reduction, has a closed-form expression under both the Gaussian and Mat\'ern correlation functions, with a computational cost of $\mathcal{O}(n^2)$.
\end{theorem}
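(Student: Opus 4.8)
The plan is to reduce everything to the trace-of-integral representation pioneered by \cite{binois2019replication} and then exploit a bordered-matrix update. First I would integrate the three terms of $\sigma_n^2(\mathbf{x},\mathbf{0})$ from \eqref{eq:pred_var} over $\mathcal{X}$ term by term. A crucial simplification is that $K_\gamma(\mathbf{0},\mathbf{t})=0$ for every $\mathbf{t}$ (immediate from \eqref{eq:LB_kernel}), so at the prediction target $\mathbf{t}=\mathbf{0}$ the fidelity part of the covariance drops out and $\mathbf{k}_n(\mathbf{x},\mathbf{0})$ reduces to the $R_{\boldsymbol{\phi}_1}$ block alone, while $K((\mathbf{x},\mathbf{0}),(\mathbf{x},\mathbf{0}))=\sigma^2$ is a design-independent constant. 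Writing $\mathbf{W}_n=\int_{\mathcal{X}}\mathbf{k}_n(\mathbf{x},\mathbf{0})\mathbf{k}_n(\mathbf{x},\mathbf{0})^T\,\mathrm{d}\mathbf{x}$, $\mathbf{C}_n=\int_{\mathcal{X}}\mathbf{f}(\mathbf{x},\mathbf{0})\mathbf{k}_n(\mathbf{x},\mathbf{0})^T\,\mathrm{d}\mathbf{x}$ and $\mathbf{B}=\int_{\mathcal{X}}\mathbf{f}(\mathbf{x},\mathbf{0})\mathbf{f}(\mathbf{x},\mathbf{0})^T\,\mathrm{d}\mathbf{x}$, expanding $\boldsymbol{\gamma}_n\boldsymbol{\gamma}_n^T$ and using linearity of the trace casts $I_n$ as a closed expression in $\mathbf{K}_n^{-1}$, $\mathbf{F}_n$ and these three integral matrices.

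Next I would add the candidate $(\tilde{\mathbf{x}},\tilde{\mathbf{t}})$ and track how each ingredient changes. The matrix $\mathbf{K}_{n+1}$ is $\mathbf{K}_n$ bordered by the column $\mathbf{k}_n(\tilde{\mathbf{x}},\tilde{\mathbf{t}})$ and diagonal entry $K((\tilde{\mathbf{x}},\tilde{\mathbf{t}}),(\tilde{\mathbf{x}},\tilde{\mathbf{t}}))$, so the Schur-complement (partitioned-inverse) formula expresses $\mathbf{K}_{n+1}^{-1}$ as $\mathbf{K}_n^{-1}$ plus a rank-one term $g^{-1}\mathbf{v}\mathbf{v}^T$ in the leading block, with $\mathbf{v}=\mathbf{K}_n^{-1}\mathbf{k}_n(\tilde{\mathbf{x}},\tilde{\mathbf{t}})$ and $g$ the scalar Schur complement. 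The integral matrices simply acquire one extra row and column, $\mathbf{W}_{n+1}=\left[\begin{smallmatrix}\mathbf{W}_n&\mathbf{w}\\\mathbf{w}^T&w\end{smallmatrix}\right]$ and analogously for $\mathbf{C}_{n+1}$, while $\mathbf{B}$ is unchanged. Substituting the bordered inverse into $I_{n+1}=\text{const}-\mathrm{tr}(\mathbf{K}_{n+1}^{-1}\mathbf{W}_{n+1})+\mathrm{tr}\big((\mathbf{F}_{n+1}^T\mathbf{K}_{n+1}^{-1}\mathbf{F}_{n+1})^{-1}\int_{\mathcal{X}}\boldsymbol{\gamma}_{n+1}\boldsymbol{\gamma}_{n+1}^T\,\mathrm{d}\mathbf{x}\big)$ and cancelling the terms that reconstruct $I_n$ yields $R_{n+1}(\tilde{\mathbf{x}},\tilde{\mathbf{t}})=I_n-I_{n+1}$ as an explicit rational function of $\mathbf{v}$, $g$, $\mathbf{w}$, $w$ and the trend matrices; a second bordered inverse, this time for the $p\times p$ normal-equations matrix $\mathbf{F}_{n+1}^T\mathbf{K}_{n+1}^{-1}\mathbf{F}_{n+1}$, handles the trend correction.

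The part I expect to be most delicate is verifying that the entries of $\mathbf{W}_n$, $\mathbf{C}_n$ and $\mathbf{B}$ admit genuine closed forms under each correlation family, since everything else is algebraic bookkeeping. Because $R_{\boldsymbol{\phi}_1}$ is separable and we evaluate at $\mathbf{t}=\mathbf{0}$, each entry factors into a product of one-dimensional integrals $\int_{a}^{b}R(x-x_i)R(x-x_j)\,\mathrm{d}x$ over the coordinate ranges of $\mathcal{X}$ (taken to be a hyper-rectangle). For the Gaussian correlation this is a product of Gaussians whose integral is an error-function expression; for the Mat\'ern-$3/2$ correlation the integrand is a piecewise product of linear-times-exponential factors, which integrates to a closed polynomial-times-exponential form after splitting the domain at $x_i$ and $x_j$. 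The cross-integrals defining $\mathbf{C}_n$ (with polynomial regressors $\mathbf{f}$) and the purely polynomial $\mathbf{B}$ are elementary. I would record these one-dimensional primitives once and assemble the matrices by tensor products.

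Finally, for the complexity claim I would observe that, given $\mathbf{K}_n^{-1}$, $\mathbf{W}_n$ and the trend matrices carried over from the previous step, forming $\mathbf{v}$, $g$, the new border entries $\mathbf{w},w$, and evaluating the resulting trace expressions each cost at most $\mathcal{O}(n^2)$ (matrix-vector products and rank-one trace updates); no $\mathcal{O}(n^3)$ re-inversion is needed, because the bordered-inverse update reuses $\mathbf{K}_n^{-1}$ directly. This establishes the stated $\mathcal{O}(n^2)$ cost per candidate evaluation.
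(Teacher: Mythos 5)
Your proposal is correct and follows essentially the same route as the paper's proof: the key observation that $K_{\gamma}(\mathbf{0},\cdot)\equiv 0$ (so every integrated quantity involves only the stationary $R_{\boldsymbol{\phi}_1}$ block while the fidelity kernel enters only through non-integrated terms), the Binois-style trace representation with the partitioned (Schur-complement) inverse of $\mathbf{K}_{n+1}$ yielding $R_{n+1}$ in closed form, one-dimensional closed-form integrals over a hyper-rectangle for the Gaussian and half-integer Mat\'ern kernels, and the resulting $\mathcal{O}(n^2)$ per-candidate cost. Two cosmetic quibbles only: the trend matrix $\mathbf{F}_{n+1}^T\mathbf{K}_{n+1}^{-1}\mathbf{F}_{n+1}$ remains $p\times p$, so its update is a rank-one Sherman--Morrison correction rather than a bordered inverse, and the piecewise polynomial-times-exponential argument you give for Mat\'ern $\nu=1.5$ should also be recorded for $\nu=0.5$ and $\nu=2.5$, which the paper's statement covers.
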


\begin{proof}
    The proof, along with the closed-form expression of $R_{n+1}(\tilde{\mathbf{x}}, \tilde{\mathbf{t}})$  for the Gaussian and Matérn correlation functions with $\nu=0.5$, $\nu=1.5$, and $\nu=2.5$, can be found in the Supplementary Materials.
\end{proof}

A straightforward observation is that minimizing the IMSPE with an additional design point $(\tilde{\mathbf{x}}, \tilde{\mathbf{t}})$ is equivalent to maximizing the IMSPE reduction, $R_{n+1}(\tilde{\mathbf{x}}, \tilde{\mathbf{t}})$. We will, therefore, use $R_{n+1}(\tilde{\mathbf{x}}, \tilde{\mathbf{t}})$ as the basis for our active learning criterion. \par

Because FEM simulations require managing the trade-off between accuracy and computational cost, our active learning method would be biased towards high-fidelity data points if we didn't account for the computational cost. Considering the computational cost $C(\mathbf{t})$, our criterion selects the next design point  $(\mathbf{x}_{n+1}, \mathbf{t}_{n+1})$ as
\begin{equation}\label{eq:costawareobj}
    (\mathbf{x}_{n+1}, \mathbf{t}_{n+1}) = \argmax_{(\tilde{\mathbf{x}}, \tilde{\mathbf{t})} \in \mathcal{X} \times \mathcal{T}} \frac{R_{n+1}{(\tilde{\mathbf{x}}, \tilde{\mathbf{t}})}}{C(\tilde{\mathbf{t}})}.
\end{equation}

An alternative formulation is to maximize $R_{n+1}(\tilde{\mathbf{x}}, \tilde{\mathbf{t}})$ subject to a total computational budget constraint. Introducing this constraint via a Lagrangian leads to an objective that reflects the same trade-off between information gain and computational cost through a cost-penalty parameter (i.e., the Lagrange multiplier). However, selecting this cost-penalty parameter can be nontrivial in practice. Following previous works \citep{stroh2022sequential, he2017optimization} that implement a ratio-based criterion, we use the ratio between the IMSPE reduction and the computational cost, as presented in Equation \eqref{eq:costawareobj}.

Taking advantage of the continuous aspect of the domain on which the optimization is performed and of the fast computation of the criterion offered by the closed-form expression given in Theorem \ref{theorem:imspe_seq}, we can use library-based numerical schemes to optimize our criterion. This is a novelty to the best of our knowledge, as other works have only explored active learning methods on a small discrete set of pre-defined fidelity levels \citep{stroh2022sequential, sung2022stacking, he2017optimization}, overlooking design points with higher potential as shown in Figure \ref{fig:illustration_AL}. In addition, these methods require to evaluate candidate design points on an exhaustive regular grid using computational-heavy methods such as Monte-Carlo approximation to evaluate the criterion. An \textsf{R} package \textsf{MuFiMeshGP} for the active learning method is available in an open repository and the Supplementary Materials.

\begin{figure}[h]
    \centering
    \includegraphics[width=\linewidth]{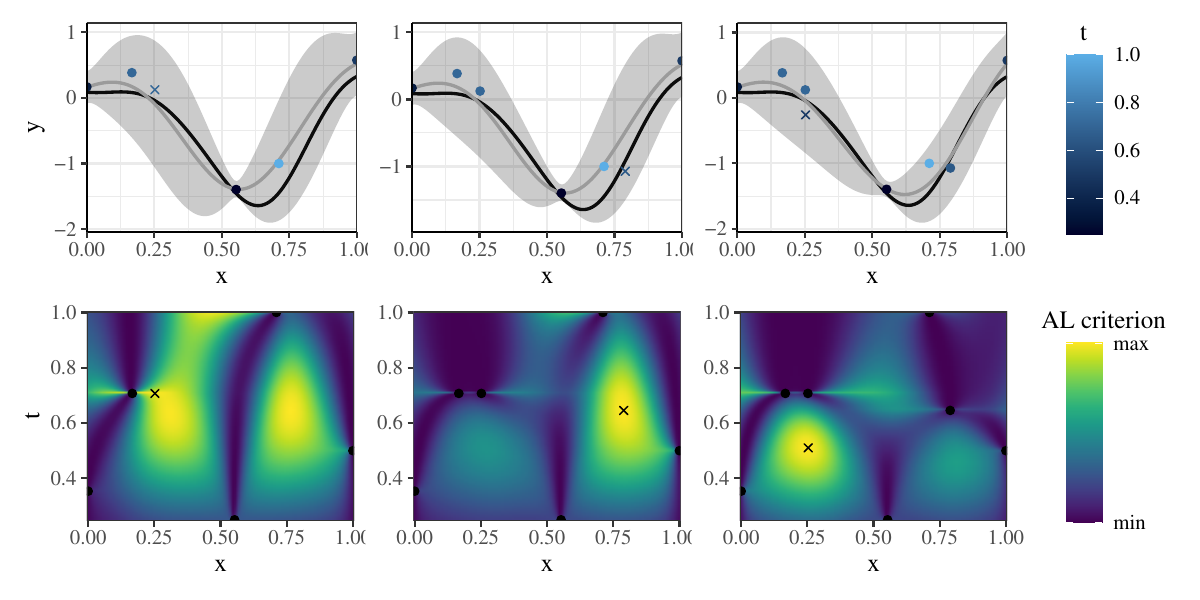}
    \caption{Top: Prediction of our model with the true function (black line), the prediction mean (gray line), and the 95\% prediction band (shaded region). The color of the design points indicate the corresponding fidelity parameter, with a darker color indicating a lower value. Bottom: Active learning criterion surface (bottom). In both panels, the bullet points $(\bullet)$ represent the current design locations, while the crosses $(\times)$ indicate the best next design point according to the criterion.}
    \label{fig:illustration_AL}
\end{figure}

Figure \ref{fig:illustration_AL} illustrates the iterative process of selecting the optimal design point based on our active learning criterion. Starting from an initial design (left panel), we fit our model and optimize our criterion to obtain the next best design point (cross dot). A computer simulation is then run at this additional design point and added to our current design (middle panel), after which we update the model. This process continues until the time budget is exhausted or a stopping criterion is met. Notably, the criterion is non-convex, requiring a multi-start optimization approach with a gradient-descent algorithm.

The underlying function in Figure \ref{fig:illustration_AL} is taken from the simulation study in Section \ref{sec:numerical_study}, and represents a case where the increments are uncorrelated (i.e., $\gamma = 0.5$ in our model). Our active learning  leverages this information by favoring data points that are at higher fidelity in the regions where the input space has already been explored (see the right panel). Indeed, since the increments along the fidelity parameter space are uncorrelated, the posterior at an already explored input point depends only on the observation with the highest available fidelity at that location. In this case, after selecting a couple sample points that help explore the input space, the active learning criterion favors data points with lower fidelity parameter values, indicating a preference for higher fidelity simulations. 

The horizontal streaks of lighter color in the active learning criterion (bottom of Figure \ref{fig:illustration_AL})  reflect that our method also accounts for the correlation in the input space: sampling  neighboring inputs at the fidelity levels that have already been explored can be  beneficial. 
Additional examples for $\gamma = 0.05$ and $\gamma = 0.95$ are provided in Figures~S1 and~S2 of the Supplementary Materials. In summary, we observe that it displays different patterns depending on the hyper-parameter $\gamma$, highlighting its adaptability and responsiveness. 

\subsection{Initial Designs for Active Learning}\label{sec:init_design}

In the previous subsection, we highlight the ability of our active learning method to be responsive to the hyper-parameters of the model, suggesting that a robust estimation of the hyper-parameters from the initial design might be advantageous for the subsequent active learning process. 
This issue was also raised by \cite{song2025efficient} in the context of screening designs for GP surrogate models, where they argue that space-filling designs are not necessarily the most optimal when used as initial designs for active learning.  Indeed, while space-filling designs and active learning criteria are typically aimed at improving predictive performance, they often do not explicitly account for uncertainty in parameter estimation, leading to unstable or unreliable hyperparameter estimates, which in turn may affect the robustness of subsequent model updates. Despite the extensive literature on active learning, the problem of optimizing the initial design to improve overall predictive performance has been mostly reduced to the choice of initial sample size required to obtain good performances \citep{loeppky2009choosing, harari2018computers}.

Our parameter estimation problem is driven by two main sources of uncertainties, both of which we believe can be addressed through better initial designs. The first arises from the inherent challenge in multi-fidelity simulations: the separation between the behavior of the real solution and the error function. In modeling terms, we need robust estimation of the parameter $\boldsymbol{\phi}^2_2$ that is able to differentiate it from $\boldsymbol{\phi}^2_1$. Additionally, the parameter $\gamma$ from the LBM kernel significantly influences the behavior of our active learning method, making a good prior estimation crucial. These factors lead us to consider designs with a stacked structure to better capture the correlation in increments along the fidelity parameter, thereby more accurately estimating  $\gamma$, as well as a nested structure to better understand the behavior of the error function and improve the estimation of $\boldsymbol{\phi}^2_2$. These designs will be compared to designs that have good projection and space-filling properties such as the Maximum Projection design (\texttt{MaxPro}, \cite{joseph2015maximum}), and the Multi-Mesh Experimental Design (\texttt{MMED}, \cite{shaowu2023design}), an extension of the \texttt{MaxPro} design in which the fidelity parameter values are chosen based on a weighted maximin criterion that accounts for computational cost. Further discussions are included in the Supplementary Materials, where we examine the impact of the initial design on the robustness of the hyper-parameter estimation and on the subsequent performance of the active learning procedure. These studies ultimately indicate that while the choice of initial design has a substantial effect on the robustness of hyperparameter estimation, the prediction performance of our active learning method is not significantly impacted by the choice of initial design.

\section{Numerical Study}\label{sec:numerical_study}

In this section, we perform numerical experiments within the framework of the multi-fidelity model to assess the performance of our active learning method with various initial designs and different non-stationary kernels, and compare these results against predictions made from a single-fidelity surrogate model.

We generate 10 sets of samples from our multi-fidelity GP surrogate model for each parameter combination $(\phi_2^2,\gamma)$, with $\phi_2^2 \in \{1, 10, 100\}$ and $\gamma \in \{0.05, 0.5, 0.95\}$, while fixing the remaining parameters at $\phi_1^2 = 1$, $a = 1$, $l = 4$, and $\sigma^2 = 1$. Each set consists of five repetitions to ensure a sufficiently large sample. We compare the performance of the \texttt{LBM} kernel with the \texttt{BM} kernel and assess one-shot (\texttt{OS}) designs against our active learning approach (\texttt{AL}). We also include a single fidelity GP surrogate model at fidelity parameter $t = 0.25$ (\texttt{SF}) to benchmark performance against the highest available fidelity level. The fidelity parameter is restricted to the range $[0.25, 1]$, with a total simulation budget of 128 and a cost function given by $C(t) = t^{-2}$. The active learning procedure is performed with an initial budget of 64 allocated towards the initial design.  The \texttt{MMED} design is used for the one-shot design and the initial design in the active learning method. 

\begin{figure}[h]
    \centering
    \includegraphics[width=\linewidth]{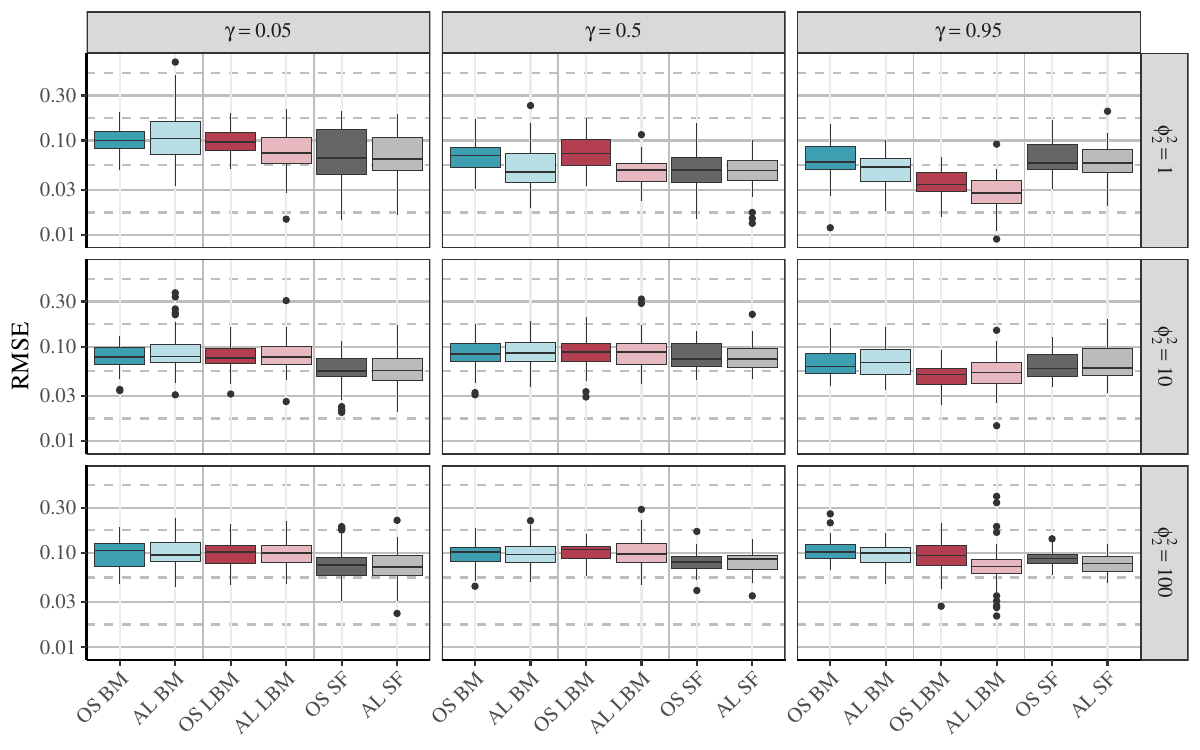}
    \caption{RMSE (in logarithmic scale) for the numerical study for each method, and all combinations of $\phi_2^2$ and $\gamma$.}
    \label{fig:sim_boxplot}
\end{figure}


The resulting predictive performance in terms of Root Mean Squared Error (RMSE) for the different methods is summarized in Figure \ref{fig:sim_boxplot}. As expected, the single-fidelity surrogate models perform comparatively better when the underlying process has a more complex behavior on the fidelity parameter space (lower $\gamma$) and when the error function displays sharp fluctuations (higher $\phi_2^2$). Despite being deliberately chosen to be highly competitive, the single-fidelity model is outperformed by our approach in favorable settings ($\gamma = 0.95)$, while our method remains highly competitive across all other scenarios. We also observe that the proposed \texttt{LBM} kernel performs considerably better than its \texttt{BM} counterpart when increments along the fidelity parameter are positively correlated ($\gamma = 0.95$), reflecting its ability  to capture this dependence structure. Finally, the active learning method applied on the multi-fidelity surrogate models shows its effectiveness as it seldom performs worse than its one-shot counterpart, and yields superior predictive performance in many cases. 

\section{Case Studies}\label{sec:casestudy}

We assess the performance of our proposed surrogate model and active learning method through three case studies. Similar to the setup of Section \ref{sec:numerical_study}, we  compare the \texttt{LBM} kernel with the \texttt{BM} kernel, and contrast our active learning method (\texttt{AL}) with the one-shot design (\texttt{OS}). The \texttt{MMED} design is used both as the one-shot design and as the initial design for active learning. We further benchmark our approach against the stacking design method proposed by \citet{sung2022stacking}.


To ensure fair comparisons, the cost function $C(\mathbf{t})$ for running a computer experiment is assumed to be known, and all surrogate models are allocated the same fixed computational budget. The stacking design method employs a target error as a stopping criterion rather than a budget; therefore, we select an appropriate target error so that the total computational cost matches the prescribed budget.

Choosing a fidelity parameter for single-fidelity models is nontrivial, as it requires balancing the number of design points against the fidelity of each simulation. Since it is impossible to guess a priori what is the optimal fidelity parameter, we select a small set of fidelity parameters that seem judicious given the cost function, total budget, and input dimension. For example, in the Poisson's equation case of Section \ref{sec:poisson_ex}, we consider final design sizes of $n = 6, 8, 10$. The corresponding fidelity parameter values are obtained by inverting the cost function $C$ given the total budget $b$, i.e., $t = C^{-1}\big(\frac{b}{n}\big)$.

In the case studies presented in Section \ref{sec:poisson_ex} and \ref{sec:jetblade_ex}, our mean regression functions, $\mathbf{f}_V(\mathbf{x})$, are Legendre $2^{\mathrm{nd}}$ degree polynomials on $[0,1]$, supplemented with the interaction functions $ \left((2x_i-1)(2x_j-1)\right)_{1\leq i<j\leq d}$. Additionally, we use $\mathbf{f}_Z(\mathbf{x}, t) = t^l$ to represent a possible trend effect in the underlying FEM simulations.  For the high-dimensional case study in Section~\ref{sec:wave_ex}, we opt to use a constant mean regression function $\mathbf{f}(\mathbf{x}, \mathbf{t}) = \mathbf{1}$.

Sections~\ref{sec:poisson_ex} and~\ref{sec:jetblade_ex} consider mesh size as the sole fidelity parameter; accordingly, we adopt the bound on $\delta$ given in~\eqref{eq:L2error}, which implies $l=4$.  Section~\ref{sec:wave_ex} involves two fidelity parameters: mesh size and time-step size, so $\mathbf{l}=(l_1,l_2)$ is two-dimensional. For the mesh dimension, we again set $l_1=4$. For the time-step dimension, due to the lack of explicit convergence rate information for the numerical solver, we assume the weakest plausible rate $r=1$, leading to $l_2=2$.

All the computer experiments were performed using finite element methods (FEM), specifically leveraging the PDE toolbox of \cite{matlabpde}. In all the case studies, the elements that constitute the mesh are controlled in size by the fidelity parameter, a scalar, that corresponds to the maximum size of the element's edges. In the case of a time-dependent PDE, the fidelity parameter vector $\mathbf{t}$ additionally includes a component governing the time-step size.

\subsection{Poisson's Equation}\label{sec:poisson_ex}

We first use an elliptic PDE system to assess the performance of our method. The system of interest is modeled using Poisson's equation on the square membrane $D = [0,1]\times [0,1]$, 
\begin{equation*}
\Delta u = (x^2 - 2\pi^2)e^{xz_1} \sin(\pi z_1) \sin(\pi z_2) + 2\pi x e^{xz_1} \cos(\pi z_1) \sin(\pi z_2), ~~~~~~(z_1, z_2) \in D,
\end{equation*}
where $u(z_1, z_2)$ is the solution of interest, $\Delta  = \frac{\partial^2}{\partial z_1^2} + \frac{\partial^2}{\partial z_2^2}$ is the Laplace operator, and $x \in  \mathcal{X}  = [-1, 1]$ is our input parameter. A Dirichlet boundary condition $u = 0$ on the boundary $\partial D$ is imposed on the system.

We consider two different responses of interest for this study: the average over $D$ and the maximum over $D$. The closed-form solution of this system given $x$ is $\displaystyle u_x(z_1, z_2) = e^{xz_1} \sin(\pi z_1)\sin(\pi z_2)$, allowing us to derive an analytical expression for its average over $D$: $\displaystyle \varphi(x) = \frac{2(e^x+1)}{x^2+\pi^2}$. For the maximum of the true solution over  $D$, we determine it numerically by optimizing $u_x$ over $D$: $\displaystyle \varphi(x) = \max_{(z_1 , z_2)\in D} u_x(z_1, z_2)$. The cost function is taken as $C(t) = t^{-2}$.

The numerical experiments are repeated 10 times for each method with a total budget of 1500 and a fidelity parameter space ranging between $\mathcal{T} =  [0.25/\sqrt{10},0.25]$. For  active learning, we allocate a budget of 500 toward the initial design of the multi-fidelity surrogate models, and use an initial design of 4 design points for the single-fidelity surrogate models. The predictive performance of the active learning methods against the cost budget is shown in Figure \ref{fig:poisson_AL}, while Figure \ref{fig:boxplot_poisson} shows the final RMSE for all methods.

\begin{figure}[h]
    \centering
    \includegraphics[width=\linewidth]{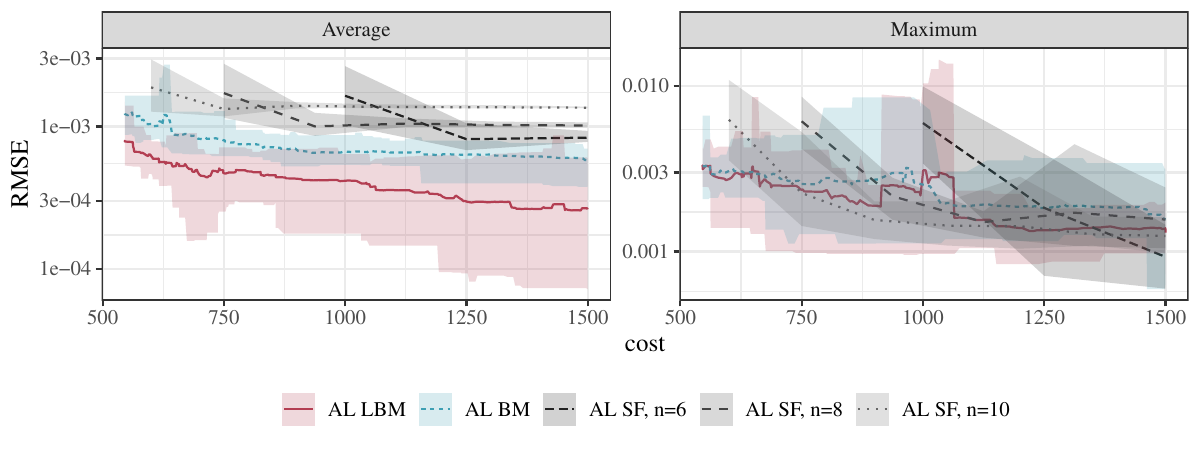}
    \caption{RMSE (in logarithmic scale) for the Poisson's equation case study with respect to the simulation cost. Solid lines indicate the average over 10 repetitions, while shaded regions represent the range. The response of interest is the average (left) or the maximum (right) over the domain $D$.}
    \label{fig:poisson_AL}
\end{figure}

\begin{figure}[h!]
    \centering
    \includegraphics[width = \linewidth]{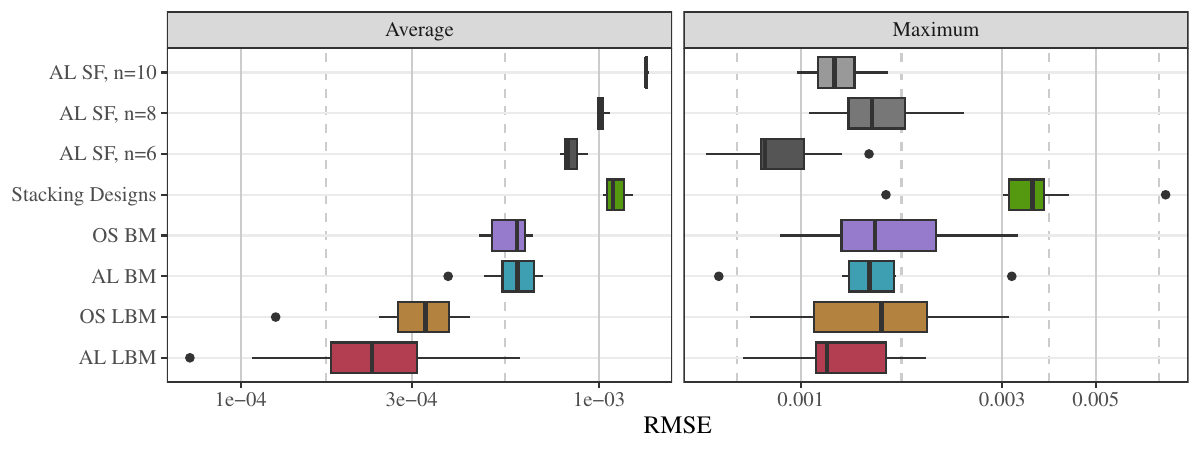}
    \caption{Boxplots of the final RMSE in logarithmic scale for the Poisson's equation case study across 10 repetitions. The response of interest is the average (left) or the maximum (right) over the domain $D$.}
    \label{fig:boxplot_poisson}
\end{figure}

Recall that Figure \ref{fig:inc_plot} illustrates the correlated structure of the two responses of interest, suggesting that our model is expected to perform particularly well for the ``Average'' response by leveraging this correlation. This is confirmed by the results, showing that the \texttt{LBM} kernel significantly outperforms other methods, while the performance differences are less pronounced for the ``Maximum'' case. Nevertheless, our model remains highly competitive in this case, especially when compared to the highest-fidelity setting ($n = 6$), and provides a significant advantage over the single fidelity model as the single fidelity model requires selecting a mesh size prior to the start of the FEM simulation. If the chosen mesh size is too coarse, predictive accuracy suffers; if it is too fine, the surrogate model may fail to converge due to budget constraints, as evident in the right panel of Figure \ref{fig:poisson_AL}. In contrast, the multi-fidelity surrogate model offers the flexibility to incorporate various mesh sizes, enabling faster convergence within a fixed budget.

Figure \ref{fig:boxplot_poisson} shows that the active learning method consistently outperforms its one-shot design counterparts, highlighting the superiority of active learning when it comes to predictive performances. This advantage is particularly pronounced when active learning is combined with the proposed \texttt{LBM} kernel, highlighting its ability to leverage the structure of the kernel. 

\subsection{Stress Analysis of a Jet Engine Turbine Blade}\label{sec:jetblade_ex}

In this second case study, we investigate the performance of our proposed model on a static structural analysis application for a jet turbine engine blade in steady-state operating condition. The turbine is a component of the jet engine, typically made from nickel alloys that resist extremely high temperatures, which needs to withstand high stress and deformations to avoid mechanical failure and friction between the tip of the blade and the turbine casing. 

The effect of thermal stress and pressure of the surrounding gases on turbine blades can be solved with FEM as a static structural model. There are two input variables: the pressure load on the pressure $(x_1)$, and suction $(x_2)$ sides of the blade, both of which range from 0.25 to 0.75 MPa, i.e., $(x_1, x_2) \in  \mathcal{X}  = [0.25, 0.75]^2$. The response of interest in this problem is the displacement of the tip of the blade towards the casing, i.e., in the y-axis direction, to determine the pressure that the turbine can withstand before friction between the blade and the casing starts to appear (see Figure \ref{fig:jetblade_illustration}).

\begin{figure}[h]
    \centering
    \includegraphics[width=0.8\linewidth]{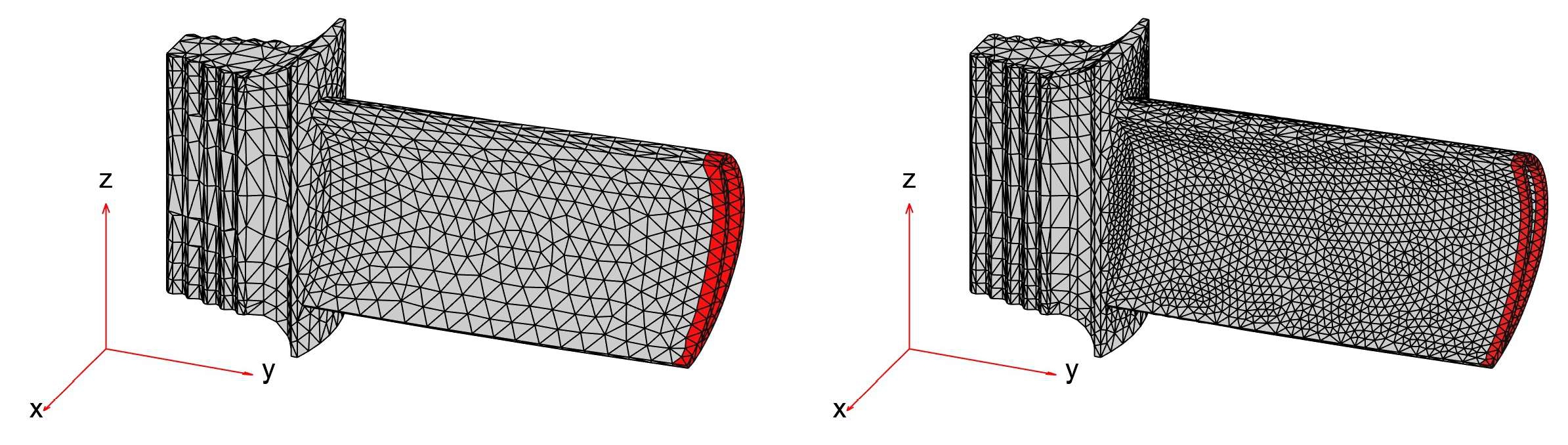}
    \caption{Visualizing the FEM mesh generation for two choices of mesh sizes ($t = 0.005$ on the left and $t = 0.0025$ on the right) in the turbine blade case study. The red cells indicate the area of interest.}
    \label{fig:jetblade_illustration}
\end{figure}

Unlike in Section \ref{sec:poisson_ex}, the true solution cannot be expressed in closed form; we thus perform validation runs at 50 points sampled from a \texttt{MaxPro} design on the input space using a mesh size of $t = 0.001$. Those simulations each took an average of 7500 seconds to complete on a High Performance Computing Center\if1\blind{ of the Institute for Cyber-Enabled Research (ICER) at MSU}\fi, utilizing up to 500GB of memory. The cost function is approximated as $C(t) = bt^{-a}$, where $a,b \in \mathbb{R}^+$ would be estimated from the initial design. For the sake of consistency, we ran a separate set of 20 simulations to estimate this function. The initial designs used by the active learning method include mesh sizes in the range $t \in [0.0025, 0.005]$.  The total budget is set to 1500, with an initial budget of 500 allocated to the active learning for multi-fidelity models, while single-fidelity models starts with an initial design of size $n = 10$.

The results of this study are summarized in Figure \ref{fig:jetblade_results}, showing that the \texttt{LBM} kernel outperforms the \texttt{BM} kernel and single-fidelity models. Notably, we observe that selecting a finer mesh size in single-fidelity models in this case does not necessarily lead to better predictive performance, demonstrating the advantage of the multi-fidelity approach: robust predictive performance without the need of fine-fidelity the design size and fidelity parameter value. Furthermore, active learning consistently improves predictions over one-shot designs, particularly with the \texttt{LBM} kernel, and outperforms stacking designs.

\begin{figure}[h]
    \centering
    \includegraphics[width = \linewidth]{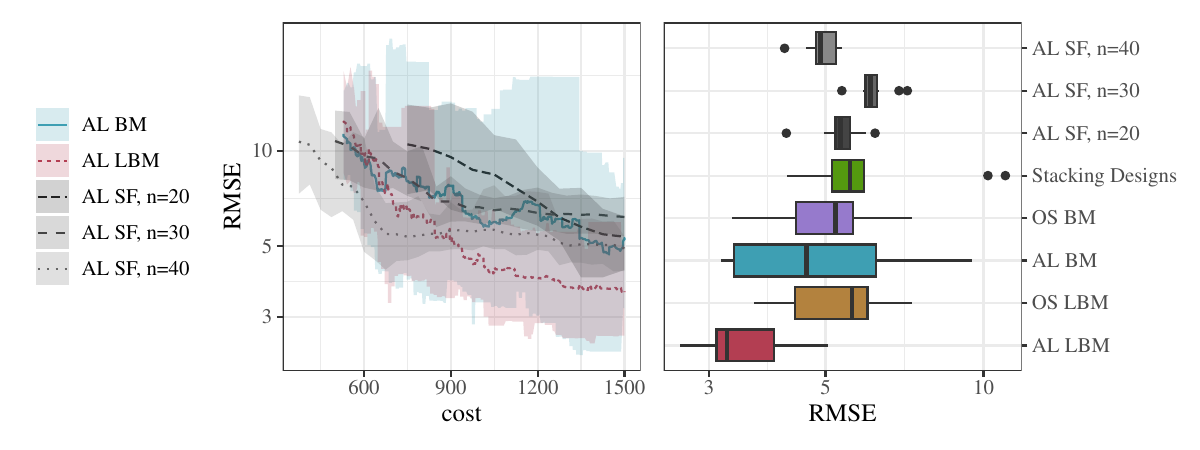}
    \caption{Left: RMSE (in logarithmic scale) for the jet engine turbine blade case study with respect to the simulation cost. Solid lines indicate the average over 10 repetitions, while shaded regions represent the range. Right: Boxplots of the final RMSE in logarithmic scale for the jet engine turbine blade study across 10 repetitions.}
    \label{fig:jetblade_results}
\end{figure}


\subsection{Wave Equation}\label{sec:wave_ex}

Finally, we study the wave equation, a time-dependent hyperbolic PDE, on a square membrane $D_z = [-1,1] \times [-1,1]$ and on the time domain $D_{\tau} = [0,5]$. The system of interest is $\Delta u = \frac{\partial^2 u}{\partial \tau^2}$, with a Neumann boundary condition $\frac{\partial u}{\partial v} = 0$ on the boundary $\partial D$ is imposed on the system, where $v$ is the outward unit normal with respect to $\partial D$. 
The initial displacement is defined as two Gaussian peaks: $u_0(z_1, z_2) = e^{-5((z_1 - x_1)^2 + (z_2 - x_2)^2))} + e^{-5((z_1 - x_3)^2 + (z_2 - x_4)^2))}$, where $(z_1, z_2) \in D_z$ and $\mathbf{x} = (x_1, x_2, x_3, x_4) \in [-1,1]^4$ is our input vector. The input vector $\mathbf{x}$ specifies the shape of the initial displacement, which is comprised of two peaks whose locations are determined by the pairs $(x_1, x_2)$ and $(x_3, x_4)$. 
In this FEM simulation, we consider two fidelity parameters, $\mathbf{t}=(t_1,t_2)$, corresponding to the mesh size and the time step size.

We divide the time domain into the initial phase $D_{\tau}^{\text{init}} = [0,1)$ and the study phase $D_{\tau}^{\text{study}} = [1,4]$ in order to let the system have enough time to transition from the initial conditions. The response of interest is then taken as the maximum of the displacement $u$ over the domain $D_z$ and the study phase $D_{\tau}^{\text{study}}$. Similarly as in Section \ref{sec:jetblade_ex}, the equation does not have an analytical solution, so we perform validation runs at 200 points sampled from a \texttt{MaxPro} design on the input space using fidelity parameter values $\mathbf{t} = (0.01, 0.001)$.
The cost function is approximated using the form $C(t) = bt_1^{-a_1}t_2^{-a_2}$, where $a_1, a_2, b \in \mathbb{R}^+$ are estimated using a separate set of 50 simulations. The available fidelity parameter space available for the active learning method is restricted to the hypercube $[0.075, 0.2] \times [0.0075, 0.02]$, while the design size $n$ for the single-fidelity model are chosen to match a certain cost. 

The results shown in Figure \ref{fig:wave_results} again demonstrate  the superiority of our proposed method (\texttt{AL LBM}) over other methods. In particular, the advantage of the active learning method over one-shot designs is more pronounced than in the previous case studies. In this case, our active learning is able to strategically select low-cost fidelity points that maximize predictive improvement, whereas one-shot designs cannot, leading to comparatively lower performance.


\begin{figure}[h]
    \centering
    \includegraphics[width = \linewidth]{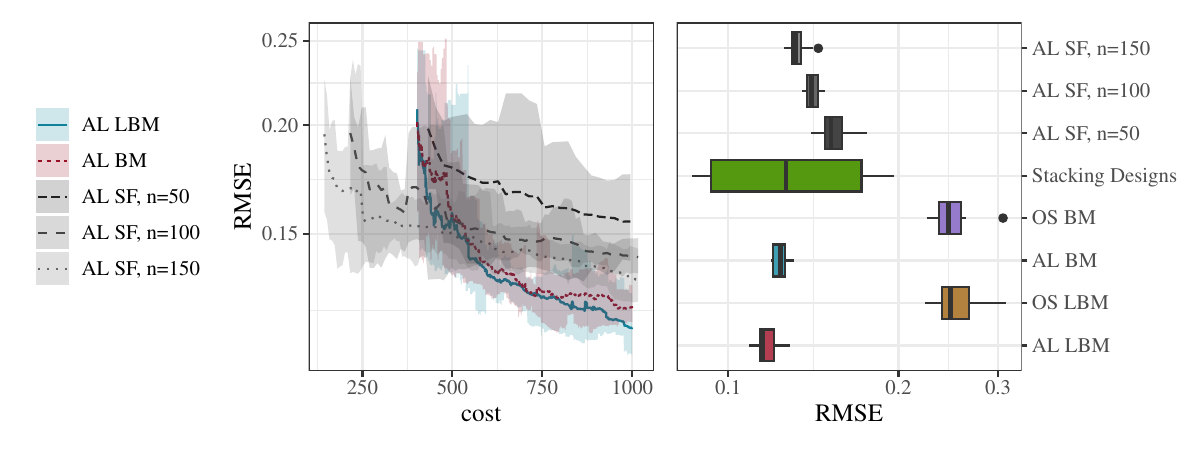}
    \caption{Left: RMSE (in logarithmic scale) for the wave equation case study with respect to the simulation cost. Solid lines indicate the average over 10 repetitions, while shaded regions represent the range. Right: Boxplots of the final RMSE in logarithmic scale for the wave equation case study across 10 repetitions.}
    \label{fig:wave_results}
\end{figure}

\section{Conclusion}\label{sec:conclusion}

Surrogate modeling for multi-fidelity experiments, particularly in the context of FEM simulations with continuous fidelity parameters, stands to benefit significantly from this new framework, which combines an adaptive kernel function that better captures the behavior of FEM outputs and an active learning method that can effectively leverage the continuous nature of the fidelity parameter. Through numerical studies and real case studies, we demonstrate that our active learning approach enhances the predictive performance of our model, while our proposed kernel offers greater flexibility and, in some cases, significantly outperforms other kernels.


We also studied the importance of the initial design in both the subsequent parameter estimation and the final predictive performance of the model. While the choice of initial design significantly impacts parameter estimation accuracy, its effect on the final prediction is less straightforward. The benefits of accurate parameter estimation must be balanced with the space-filling properties of the design. Further research is needed to explore this topic, particularly to account for varying sources of uncertainty, as discussed in \cite{haaland2018framework}, which provides upper bounds for prediction errors due to numerical approximation and parameter estimation.

Furthermore, a natural extension of our framework to multi-dimensional outputs, which are common in complex computer simulations, is to first reduce the dimensionality of the simulator outputs using techniques such as principal component decomposition \citep{higdon2008computer}, wavelet-based methods \citep{bayarri2007computer}, or modern deep learning approaches, such as autoencoders, to obtain a compact representation of high-dimensional outputs. Once projected onto a lower-dimensional space, our multi-fidelity surrogate model and active learning approach can be applied, with predictions subsequently mapped back to the original output space. 

Finally, it would be valuable to consider the transition costs between mesh configurations in FEM simulations, particularly the potentially high cost of moving from a coarse to a dense mesh, to better reflect real-world constraints. The look-ahead approach proposed in \cite{binois2019replication} could offer a promising direction for deciding whether to refine the mesh size or explore a new input location. We leave this as an avenue for future work.

\vspace{0.5cm}
\noindent\textbf{Supplemental Materials}
Additional supporting materials can be found in Supplemental Materials, including the proof of Theorem \ref{theorem:imspe_seq}, and the supporting figures for Sections \ref{sec:active_learning} and \ref{sec:numerical_study}. The \textsf{R} code and package for reproducing the results in Sections
\ref{sec:numerical_study} and \ref{sec:casestudy} are also provided.

\vspace{0.5cm}
\noindent\textbf{Data Availability Statement} The authors confirm that the data supporting the findings of this study are available within the article and its supplementary materials.

\bibliography{ref}

\end{document}


\maketitle

\spacingset{1}

\section{Details on Section 1}

\subsection{Asymptotic properties of the non-stationary kernel $K_{\gamma}$}

We first we demonstrate that that if any fidelity parameter $t_j$ of $\mathbf{t}$ is non-zero, then $\lim_{\mathbf{t}_{-j}\to \mathbf{0}} P(\delta(\mathbf{x}, \mathbf{t}) = 0) = 0$ given our multi-fidelity model, and the non-stationary kernel $K_{\gamma}$.

Observing that $\delta(\mathbf{x}, \mathbf{t}) \sim \mathcal{N}(\mu_{\delta}(\mathbf{x}, \mathbf{t}), \sigma^2K_{\gamma}(\mathbf{t}, \mathbf{t}))$, we only need to prove that if any fidelity parameter $t_j$ of $\mathbf{t}$ is non-zero, then $K_{\gamma}(\mathbf{t}, \mathbf{t}) > 0$. Since

\begin{align*}
    K_{\gamma}(\mathbf{t}, \mathbf{t}) & = \Big(\sum_{j = 1}^m (a_j t_j^{l_j})^{\frac{1}{\gamma}}\Big)^{\gamma} = \Vert \mathbf{a} \odot \mathbf{t}^{\mathbf{l}} \Vert_{\frac{1}{\gamma}}
\end{align*}

where $\mathbf{a} \odot \mathbf{t}^{\mathbf{l}} = (a_j t_j^{l_j})_{j=1}^m$, with $a_j > 0$ for all $j$, and $\displaystyle \Vert v \Vert_p = \big(\sum_{i=1}^n v_i^p\big)^{\frac{1}{p}}$ is the $p$-norm for vectors.

The monotonous property of the $p$-norm implies that $\Vert x \Vert_p \leq \Vert x \Vert_q$ for $p \leq q$, therefore we obtain that

\begin{align*}
    K_{\gamma}(\mathbf{t}, \mathbf{t}) & = \Vert \mathbf{a} \odot \mathbf{t}^{\mathbf{l}} \Vert_{\frac{1}{\gamma}} \leq \Vert \mathbf{a} \odot \mathbf{t}^{\mathbf{l}} \Vert_{\infty} = \max_{1 \leq j \leq m} a_j t_j^{l_j} > 0
\end{align*}

since $a_j > 0$ for all $j$ and there is a fidelity parameter $t_j > 0$ by assumption.

\subsection{Gradients of the log-likelihood}

The first step in deriving the gradient of the log-likelihood is to derive the gradient of $\mathbf{K}_{0,n}$ with respect of all its hyper-parameters. We remind the reader of the log-likelihood,
\begin{equation}\label{eq:MLE}
    l(\boldsymbol{\theta}; \mathbf{Y}) \propto - \frac{n-p}{2} \log \Big\{ \mathbf{Z}^T\left(\mathbf{K}_{0,n}^{-1} - \mathbf{K}_{0,n}^{-1}\mathbf{F_n}\mathbf{P}_n^{-1} \mathbf{F_n}^T \mathbf{K}_{0,n}^{-1} \right) \mathbf{Z} \Big\} 
    -\frac{1}{2} \log \vert \mathbf{K}_{0,n}\vert  -\frac{1}{2} \log \vert \mathbf{P}_n\vert
\end{equation}
with  $\mathbf{P}_n = \mathbf{F}_n^T \mathbf{K}_{0,n}^{-1}\mathbf{F}_n$.\par

We provide the gradients of the kernels $R_{\phi}$ associated with the input parameter $\mathbf{x}$ are readily available for the Gaussian kernel and Mat\'ern kernel with $\nu = 0.5, 1.5, 2.5$ below.\par

For the Gaussian kernel, defined as $R_{\boldsymbol{\phi}}(\mathbf{h}) = \mathrm{e}^{-\sum_{j=1}^d\phi_j^2h_j^2}$, the gradient with respect to $\phi_i$ is

$$\frac{\partial R_{\boldsymbol{\phi}}}{\partial \phi_i} = 2\phi_i h_i^2 R_{\boldsymbol{\phi}}(\mathbf{h})$$

For the Mat\'ern kerne kernel with $\nu = 0.5$, defined as $R_{\boldsymbol{\phi}}(\mathbf{h}) = \mathrm{e}^{-\sum_{j=1}^d\phi_j \vert h_j\vert}$, the gradient with respect to $\phi_i$ is

$$\frac{\partial R_{\boldsymbol{\phi}}}{\partial \phi_i} = \vert h_i\vert R_{\boldsymbol{\phi}}(\mathbf{h})$$

For the Mat\'ern kerne kernel with $\nu = 1.5$, defined as $R_{\boldsymbol{\phi}}(\mathbf{h})=\prod^d_{j=1}\Big( 1+\sqrt{3}\phi_j|h_j| \Big) \mathrm{e}^{-\sqrt{3}\phi_j|h_j|}$, the gradient with respect to $\phi_i$ is

$$\frac{\partial R_{\boldsymbol{\phi}}}{\partial \phi_i} = -3\phi_i h_i^2\mathrm{e}^{-\sqrt{3}\phi_i \vert h_i \vert} \Bigg(\prod_{j\neq i}\Big( 1+\sqrt{3}\phi_j|h_j| \Big) \mathrm{e}^{-\sqrt{3}\phi_j|h_j|}\Bigg)$$

For the Mat\'ern kerne kernel with $\nu = 2.5$, defined as $\displaystyle \prod^d_{j=1}\Big( 1+\sqrt{5}\phi_j|h_j|+\frac{5}{3} \phi_j^2 h_j^2 \Big) \mathrm{e}^{-\sqrt{5}\phi_j|h_j|}$, the gradient with respect to $\phi_i$ is

$$\frac{\partial R_{\boldsymbol{\phi}}}{\partial \phi_i} = -\frac53\big(\phi_i h_i^2 + \sqrt{5} \phi_i^2 \vert h_i \vert^3\big)\mathrm{e}^{-\sqrt{5}\phi_i \vert h_i \vert} \Bigg(\prod_{j\neq i}\Big( 1+\sqrt{5}\phi_j|h_j| + \frac{5}{3} \phi_j^2 h_j^2\Big) \mathrm{e}^{-\sqrt{5}\phi_j|h_j|}\Bigg)$$

We can then obtain the gradient of $K_t$ with respect to $\gamma$. To reduce the expression, we define $u_{1, j} = a_j t_{1,j}^{l_j}$ and $u_{2, j} = a_j t_{2,j}^{l_j}$.

\begin{align*}
\frac{\partial K_t(t_1, t_2)}{\partial \gamma} &= \frac12 \Bigg[\Big(\sum_{j=1}^m u_{1,j}^{\frac{1}{\gamma}} \Big)^{\gamma}\Bigg(\ln\Big(\sum_{j=1}^m u_{1,j}^{\frac{1}{\gamma}} \Big) - \frac{\sum_{j=1}^m \ln(u_{1,j})u_{1,j}^{\frac{1}{\gamma}}}{\gamma \sum_{j=1}^m u_{1,j}^{\frac{1}{\gamma}}} \Bigg) \\
& \qquad \qquad \qquad \qquad \qquad + \Big(\sum_{j=1}^m u_{2,j}^{\frac{1}{\gamma}} \Big)^{\gamma}\Bigg(\ln\Big(\sum_{j=1}^m u_{2,j}^{\frac{1}{\gamma}} \Big) - \frac{\sum_{j=1}^m \ln(u_{2,j})u_{2,j}^{\frac{1}{\gamma}}}{\gamma \sum_{j=1}^m u_{2,j}^{\frac{1}{\gamma}}} \Bigg) \\
& - \Big(\sum_{j=1}^m (u_{1,j}^{\frac{1}{2\gamma}} - u_{2,j}^{\frac{1}{2\gamma}})^2 \Big)^{\gamma}\Bigg(\ln\Big(\sum_{j=1}^m (u_{1,j}^{\frac{1}{2\gamma}} - u_{2,j}^{\frac{1}{2\gamma}})^2 \Big) \\
& \qquad \qquad \qquad \qquad \qquad \qquad - \frac{\sum_{j=1}^m (u_{1,j}^{\frac{1}{2\gamma}} - u_{2,j}^{\frac{1}{2\gamma}}\big(\ln(u_{1,j})u_{1,j}^{\frac{1}{\gamma}}-\ln(u_{2,j})u_{2,j}^{\frac{1}{\gamma}}\big)}{\gamma \sum_{j=1}^m (u_{1,j}^{\frac{1}{2\gamma}} - u_{2,j}^{\frac{1}{2\gamma}})^2} \Bigg)\Bigg] 
\end{align*}

It is then straightforward to obtain the gradient of $\mathbf{K}_{0,n}$ with respect to all the hyper-parameters.\par 
From here, we can obtain the gradients for $\mathbf{P}_n$ for any hyper-parameter $\theta$ by carefully performing matrix calculus \citep{petersen2008cookbook},
\begin{align*}
    \frac{\partial \mathbf{K}_{0,n}^{-1}}{\partial \theta} & = -\mathbf{K}_{0,n}^{-1}\frac{\partial\mathbf{K}_{0,n}}{\partial \theta}\mathbf{K}_{0,n}^{-1}\\
    \frac{\partial \mathbf{P}_n}{\partial \theta} & = \mathbf{F}_n^T \frac{\partial \mathbf{K}_{0,n}^{-1}}{\partial \theta}\mathbf{F}_n
\end{align*}

We can additionally find the gradients for the log of the determinants
\begin{align*}
    \frac{\partial \log \vert \mathbf{K}_{0,n}\vert}{\partial \theta} & = \mathrm{tr} \left(\mathbf{K}_{0,n}^{-1} \frac{\partial \mathbf{K}_{0,n}}{\partial \theta}\right) \\
    \frac{\partial \log \vert \mathbf{P}_n\vert}{\partial \theta} & = \mathrm{tr} \left(\mathbf{P}_n^{-1} \frac{\partial \mathbf{P}_n}{\partial \theta}\right) 
\end{align*}

Finally, for any hyper-parameter, we obtain the following gradient of the log likelihood

\begin{align}
    \frac{\partial l(\boldsymbol{\theta}; \mathbf{Y})}{\partial \theta} & = -\frac12 \mathrm{tr} \left(\mathbf{K}_{0,n}^{-1} \frac{\partial \mathbf{K}_{0,n}}{\partial \theta}\right) - \frac12 \mathrm{tr} \left(\mathbf{P}_n^{-1} \frac{\partial \mathbf{P}_n}{\partial \theta}\right) \nonumber \\
    & \quad  - \frac{n-p}{2}\cdot \frac{\mathbf{Z}^T\left(2\cdot\frac{\partial \mathbf{K}_{0,n}^{-1}}{\partial \theta}\mathbf{F_n}\mathbf{P}_n^{-1} \mathbf{F_n}^T \mathbf{K}_{0,n}^{-1} + \mathbf{K}_{0,n}^{-1}\mathbf{F_n}\frac{\partial \mathbf{P}_n}{\partial \theta} \mathbf{F_n}^T \mathbf{K}_{0,n}^{-1}\right)\mathbf{Z} - \mathbf{Z}^T\frac{\partial \mathbf{K}_{0,n}^{-1}}{\partial \theta}\mathbf{Z}}{\mathbf{Z}^T\left(\mathbf{K}_{0,n}^{-1} - \mathbf{K}_{0,n}^{-1}\mathbf{F_n}\mathbf{P}_n^{-1} \mathbf{F_n}^T \mathbf{K}_{0,n}^{-1} \right) \mathbf{Z}} \label{gradloglik}
\end{align}

\section{Proof of Theorem 1}

Following from \cite{binois2019replication}, we can obtain that that the IMSPE can take a closed-form expression, avoiding the computational burden of evaluating the integral over the whole domain.

\begin{lemma}\label{lemma:imspe_closed}
    Let $\mathbf{W}_n = (w(\mathbf{x}_i,\mathbf{x}_j))_{i,j}$ be a $N \times N$ matrix with \\ $w(\mathbf{x}_i,\mathbf{x}_j) =\int_{\mathbf{x}\in\mathcal{X}} K((\mathbf{x}_i,0),(\mathbf{x},0))K((\mathbf{x}_j,0),(\mathbf{x},0))d\mathbf{x}$.\\ Let $E =\int_{\mathbf{x}\in\mathcal{X}} K((\mathbf{x},0),(\mathbf{x},0))d\mathbf{x}$, let $ \mathbf{G}=(g_{i,j})_{i,j}$ be a $p \times p$ matrix such that $g_{i,j} = \int_{\mathbf{x}\in\mathcal{X}} f_{i}(\mathbf{x},0)f_j(\mathbf{x},0) d\mathbf{x}$, and let $ \mathbf{H}_n = (h_j(\mathbf{x}_i))_{i,j}$ be a $n \times p$ matrix such that $h_j(\mathbf{x}_i) = \int_{\mathbf{x}\in\mathcal{X}} k_n((\mathbf{x}_i,0),(\mathbf{x},0))f_j(\mathbf{x},0) d\mathbf{x}$. Then, 
    \begin{equation}\label{eq:imspe_closed}
            I_n = E - \mathrm{tr}(\mathbf{K}_n^{-1}\mathbf{W}_n) + \mathrm{tr}\left(\mathbf{M}_n\mathbf{W}_n\right) + \mathrm{tr}\left((\mathbf{F}_n^T\mathbf{K}_n^{-1}\mathbf{F}_n)^{-1}\mathbf{G}\right) - 2.\mathrm{tr}(\mathbf{P}_n\mathbf{H}_n)
    \end{equation}
    where $\mathbf{M}_n = \mathbf{K}_n^{-1}\mathbf{F}_n(\mathbf{F}_n^T\mathbf{K}_n^{-1}\mathbf{F}_n)^{-1}\mathbf{F}_n^T\mathbf{K}_n^{-1}$, and $\mathbf{P}_n = (\mathbf{F}_n^T\mathbf{K}_n^{-1}\mathbf{F}_n)^{-1}\mathbf{F}_n^T\mathbf{K}_n^{-1}$.
\end{lemma}
\begin{proof}
    Observing that 
    \begin{align*}
        I_n & = \int_{\mathbf{x}\in \mathcal{X}} \sigma_n^2(\mathbf{x},0) d\mathbf{x} \nonumber \\ 
        & = \int_{\mathbf{x}\in \mathcal{X}}  K\left((\mathbf{x},0),(\mathbf{x},0)\right) - \mathbf{k}_n(\mathbf{x},0)^T \mathbf{K}_n^{-1} \mathbf{k}_n(\mathbf{x},0) + \left(\boldsymbol{\gamma}_n(\mathbf{x},0)\right)^T(\mathbf{F}_n^T\mathbf{K}_n^{-1}\mathbf{F}_n)^{-1}\boldsymbol{\gamma}_n(\mathbf{x},0) d\mathbf{x} \nonumber \\
        & = \mathbb{E}\left[K\left((X,0),(X,0)\right)\right] - \mathbb{E}\left[\mathbf{k}_n(X,0)^T \mathbf{K}_n^{-1} \mathbf{k}_n(X,0)\right] + \mathbb{E}\left[\left(\boldsymbol{\gamma}_n(X,0)\right)^T(\mathbf{F}_n^T\mathbf{K}_n^{-1}\mathbf{F}_n)^{-1}\boldsymbol{\gamma}_n(X,0)\right]
    \end{align*}
    where $X$ is uniformly sampled on $\mathcal{X}$.\par
    Since $K\left((X,0),(X,0)\right) =  \sigma_1^2  R_{\boldsymbol{\phi}_1}(X,X) = \sigma_1^2$, we can set $E = \mathbb{E}\left[K\left((X,0),(X,0)\right)\right]$ aside since it will stay constant.\par
    \cite{binois2019replication} cleverly observe that  $$\mathbb{E}\left[\mathbf{k}_n(X,0)^T \mathbf{K}_n^{-1} \mathbf{k}_n(X,0)\right] = \mathrm{tr}\left(\mathbb{E}\left[\mathbf{K}_n^{-1} \mathbf{k}_n(X,0)\mathbf{k}_n(X,0)^T\right]\right) = \mathbb{E}\left[\mathbf{K}_n^{-1}(\mathbf{M} + \mathbf{m}\mathbf{m}^T)\right]$$ where $\mathbf{M} = \mathrm{Cov}(\mathbf{k}_n(X,0), \mathbf{k}_n(X,0))$, and $\mathbf{m} = \mathbb{E}\left[\mathbf{k}_n(X,0)\right]$. Observing that $\mathbf{W}_n = \mathbf{M} + \mathbf{m}\mathbf{m}^T$ yields the first part of the proof.\par
    We can extend this to the term $\mathbb{E}\left[\left(\boldsymbol{\gamma}_n(X,0)\right)^T(\mathbf{F}_n^T\mathbf{K}_n^{-1}\mathbf{F}_n)^{-1}\boldsymbol{\gamma}_n(X,0)\right]$ by first using that $\boldsymbol{\gamma}_n(X,0) = \left(\mathbf{f}(X,0) - \mathbf{F}_n^T\mathbf{K}_n^{-1}\mathbf{k}_n(X,0)\right)$ to obtain
    \begin{align*}
        \mathbb{E}\left[\left(\boldsymbol{\gamma}_n(X,0)\right)^T(\mathbf{F}_n^T\mathbf{K}_n^{-1}\mathbf{F}_n)^{-1}\boldsymbol{\gamma}_n(X,0)\right] & = \mathbb{E}\left[\mathbf{f}(X,0)^T(\mathbf{F}_n^T\mathbf{K}_n^{-1}\mathbf{F}_n)^{-1}\mathbf{f}(X,0)\right] \\
        & + \mathbb{E}\left[\mathbf{k}_n(X,0)^T\mathbf{K}_n^{-1}\mathbf{F}_n(\mathbf{F}_n^T\mathbf{K}_n^{-1}\mathbf{F}_n)^{-1}\mathbf{F}_n^T\mathbf{K}_n^{-1}\mathbf{k}_n(X,0)\right] - \\ 
        &  2\cdot \mathbb{E}\left[\mathbf{f}(X,0)^T(\mathbf{F}_n^T\mathbf{K}_n^{-1}\mathbf{F}_n)^{-1}\mathbf{F}_n^T\mathbf{K}_n^{-1}\mathbf{k}_n(X,0)\right]
    \end{align*}
    We can observe that
    \begin{align*}
        \mathbb{E}\left[\mathbf{f}(X,0)^T(\mathbf{F}_n^T\mathbf{K}_n^{-1}\mathbf{F}_n)^{-1}\mathbf{f}(X,0)\right] & = \mathrm{tr}\left((\mathbf{F}_n^T\mathbf{K}_n^{-1}\mathbf{F}_n)^{-1}\mathbb{E}\left[\mathbf{f}(X,0)\mathbf{f}(X,0)^T\right]\right)\\
        & =  \mathrm{tr}\left((\mathbf{F}_n^T\mathbf{K}_n^{-1}\mathbf{F}_n)^{-1}\mathbf{G}\right)
    \end{align*}
    \begin{align*}
        & \mathbb{E}\left[\mathbf{k}_n(X,0)^T\mathbf{K}_n^{-1}\mathbf{F}_n(\mathbf{F}_n^T\mathbf{K}_n^{-1} \mathbf{F}_n)^{-1}\mathbf{F}_n^T\mathbf{K}_n^{-1}\mathbf{k}_n(X,0)\right] \\
        & \qquad \qquad \qquad \qquad \qquad \qquad \qquad = \mathrm{tr}\left(\mathbf{K}_n^{-1}\mathbf{F}_n(\mathbf{F}_n^T\mathbf{K}_n^{-1}\mathbf{F}_n)^{-1}\mathbf{F}_n^T\mathbf{K}_n^{-1}\mathbb{E}\left[\mathbf{k}_n(X,0)\mathbf{k}_n(X,0)^T\right]\right) \\
        & \qquad \qquad \qquad \qquad \qquad \qquad \qquad =  \mathrm{tr}\left(\mathbf{M}_n\mathbf{W}_n\right)
    \end{align*}
    \begin{align*}
    \mathbb{E}\left[\mathbf{f}(X,0)^T(\mathbf{F}_n^T\mathbf{K}_n^{-1}\mathbf{F}_n)^{-1}\mathbf{F}_n^T\mathbf{K}_n^{-1}\mathbf{k}_n(X,0)\right] & = \mathrm{tr}\left((\mathbf{F}_n^T\mathbf{K}_n^{-1}\mathbf{F}_n)^{-1}\mathbf{F}_n^T\mathbf{K}_n^{-1}\mathbb{E}\left[\mathbf{k}_n(X,0)\mathbf{f}(X,0)^T\right]\right) \\
    & =  \mathrm{tr}\left(\mathbf{P}_n\mathbf{H}_n\right)
    \end{align*}
    
    The final form (A.1) follows from the three equations above.
\end{proof}

\vspace{10pt}

Fixing the first $n$ design points, we obtain the following forms for $\mathbf{K}_{n+1}$ and $\mathbf{W}_{n+1}$ using the new design point candidate $(\tilde{\mathbf{x}},\tilde{t})$.
$$\mathbf{K}_{n+1} = \begin{bmatrix}
\mathbf{K}_n & \mathbf{k}_n(\tilde{\mathbf{x}},\tilde{t}) \\
\mathbf{k}_n(\tilde{\mathbf{x}},\tilde{t})^T & K((\tilde{\mathbf{x}},\tilde{t}),(\tilde{\mathbf{x}},\tilde{t}))
\end{bmatrix},~~~~~~~~~~~~~ 
\mathbf{W}_{n+1} = \begin{bmatrix}
    \mathbf{W}_n & \mathbf{w}(\tilde{\mathbf{x}}) \\
    \mathbf{w}(\tilde{\mathbf{x}})^T & w(\tilde{\mathbf{x}},\tilde{\mathbf{x}})
\end{bmatrix}$$

where $\mathbf{w}(\tilde{\mathbf{x}}) = (w(\tilde{\mathbf{x}},\mathbf{x}_i))_{1\leq i \leq N}$.

The partitioned matrix inverse \citep{barnett1979matrix} gives,

$$\mathbf{K}_{n+1}^{-1} = \begin{bmatrix}
\mathbf{K}_n^{-1} +  \sigma_n^2(\tilde{\mathbf{x}},\tilde{t})\mathbf{a}(\tilde{\mathbf{x}},\tilde{t})\mathbf{a}(\tilde{\mathbf{x}},\tilde{t})^T &  \mathbf{a}(\tilde{\mathbf{x}},\tilde{t}) \\
\mathbf{a}(\tilde{\mathbf{x}},\tilde{t})^T & \sigma_n^2(\tilde{\mathbf{x}},\tilde{t})^{-1}
\end{bmatrix}$$
where $\mathbf{a}(\tilde{\mathbf{x}},\tilde{t}) = -\sigma_n^2(\tilde{\mathbf{x}},\tilde{t})^{-1}\mathbf{K}_n^{-1}\mathbf{k}_n(\tilde{\mathbf{x}},\tilde{t})$
\vspace{6pt}

Following the framework from \cite{binois2019replication}, we use the decomposition $\mathrm{tr}(\mathbf{K}_n^{-1}\mathbf{W}_n) = \mathbf{1}^T(\mathbf{K}_n^{-1} \odot \mathbf{W}_n)\mathbf{1}$, where $\odot$ represents the Hadamard (element-wise) product for matrices.

Combining these two results gives us
\begin{equation}
    \begin{split}
    \mathrm{tr}(\mathbf{K}_{n+1}^{-1}\mathbf{W}_{n+1}) & = \mathbf{1}^T(\mathbf{K}_{n+1}^{-1} \odot \mathbf{W}_{n+1})\mathbf{1} \\
    & = \mathbf{1}^T(\mathbf{K}_n^{-1} \odot \mathbf{W}_n)\mathbf{1} + \sigma_n^2(\tilde{\mathbf{x}},\tilde{t})\mathbf{a}(\tilde{\mathbf{x}},\tilde{t})^T\mathbf{W}_n \mathbf{a}(\tilde{\mathbf{x}},\tilde{t}) + 2\mathbf{w}(\tilde{\mathbf{x}})^T\mathbf{a}(\tilde{\mathbf{x}},\tilde{t}) + \sigma_n^2(\tilde{\mathbf{x}},\tilde{t})^{-1}w(\mathbf{x},\mathbf{x}) \\
    & = \mathrm{tr}(\mathbf{K}_n^{-1}\mathbf{W}_n) + \sigma_n^2(\tilde{\mathbf{x}},\tilde{t})\mathbf{a}(\tilde{\mathbf{x}},\tilde{t})^T\mathbf{W}_n \mathbf{a}(\tilde{\mathbf{x}},\tilde{t}) + 2\mathbf{w}(\tilde{\mathbf{x}})^T\mathbf{a}(\tilde{\mathbf{x}},\tilde{t}) + \sigma_n^2(\tilde{\mathbf{x}},\tilde{t})^{-1}w(\tilde{\mathbf{x}},\tilde{\mathbf{x}})  \label{eq:r1}
    \end{split}
\end{equation}

We can also obtain the following partitioned forms for $\mathbf{F}_{n+1}$ and $\mathbf{H}_{n+1}$,

$$\mathbf{F}_{n+1} = \begin{bmatrix}
\mathbf{F}_n \\
\mathbf{f}(\tilde{\mathbf{x}},\tilde{t})^T
\end{bmatrix},~~~~~~~~~~~~~ 
\mathbf{H}_{n+1} = \begin{bmatrix}
    \mathbf{H}_n \\
    \mathbf{h}(\tilde{\mathbf{x}})^T
\end{bmatrix}$$
where  $\mathbf{h}(\tilde{\mathbf{x}}) = \left(h_j(\tilde{\mathbf{x}})\right)_{1\leq j \leq p}$.\\

We use partitioned matrix equations to obtain that
\begin{equation}
\label{imspe_uk_a}
        \mathbf{F}_{n+1}^T\mathbf{K}_{n+1}^{-1}\mathbf{F}_{n+1} = \mathbf{F}_n^T\mathbf{K}_n^{-1}\mathbf{F}_n + \mathbf{v}\mathbf{v}^T
\end{equation}
where  $\displaystyle \mathbf{v} = \tilde{\sigma}_n(\tilde{\mathbf{x}}, \tilde{t})\mathbf{F}_n^T\mathbf{a}(\tilde{\mathbf{x}},\tilde{t}) + \left(\tilde{\sigma}_n(\tilde{\mathbf{x}}, \tilde{t})\right)^{-1}f(\tilde{\mathbf{x}}, \tilde{t})$, and $\tilde{\sigma}_n^2(\tilde{\mathbf{x}}, \tilde{t}) = K((\tilde{\mathbf{x}},\tilde{t}),(\tilde{\mathbf{x}},\tilde{t})) - \mathbf{k}_n(\tilde{\mathbf{x}},\tilde{t})^T \mathbf{K}_n^{-1} \mathbf{k}_n(\tilde{\mathbf{x}},\tilde{t})$.\\

The Sherman-Morrison formula \citep{sherman1950adjustment} on (\ref{imspe_uk_a}) leads to

\begin{equation}
\label{imspe_uk_b}
    (\mathbf{F}_{n+1}^T\mathbf{K}_{n+1}^{-1}\mathbf{F}_{n+1})^{-1} = (\mathbf{F}_n^T\mathbf{K}_n^{-1}\mathbf{F}_n)^{-1} - \mathbf{B}_n
\end{equation}
where $\displaystyle \mathbf{B}_n = \frac{(\mathbf{F}_{n}^T\mathbf{K}_{n}^{-1}\mathbf{F}_{n})^{-1} \mathbf{v}\mathbf{v}^T (\mathbf{F}_{n}^T\mathbf{K}_{n}^{-1}\mathbf{F}_{n})^{-1}}{1+\mathbf{v}^T (\mathbf{F}_{n}^T\mathbf{K}_{n}^{-1}\mathbf{F}_{n})^{-1}\mathbf{v}}$.\\

Additionally, we can observe that 
\begin{equation}
\label{imspe_uk_c}
    \mathbf{K}_{n+1}^{-1}\mathbf{F}_{n+1} = \begin{bmatrix}
    \mathbf{K}_n^{-1}\mathbf{F}_n  \\
    \mathbf{0}_p^T
\end{bmatrix} + \mathbf{U}_n
\end{equation}
where $\mathbf{U}_n = \begin{bmatrix}
    \mathbf{S}_n  \\
    \mathbf{T}_n
\end{bmatrix} = \begin{bmatrix}
    \tilde{\sigma}_n^2(\tilde{\mathbf{x}},\tilde{t})\mathbf{a}(\tilde{\mathbf{x}},\tilde{t})\mathbf{a}(\tilde{\mathbf{x}},\tilde{t})^T\mathbf{F}_n  + \mathbf{a}(\tilde{\mathbf{x}},\tilde{t})f(\tilde{\mathbf{x}},\tilde{t})^T  \\
    \mathbf{a}(\tilde{\mathbf{x}},\tilde{t})^T\mathbf{F}_n + \left(\tilde{\sigma}_n^2(\tilde{\mathbf{x}},\tilde{t})\right)^{-1}f(\tilde{\mathbf{x}},\tilde{t})^T
\end{bmatrix}$.\\

Using equations (\ref{imspe_uk_b}) and (\ref{imspe_uk_c}), we can obtain the following results:

\begin{flalign}
\label{imspe_uk_d}
        \mathbf{M}_{n+1} & = \begin{bmatrix}
        \mathbf{M}_n - \mathbf{K}_n^{-1}\mathbf{F}_n\mathbf{B}_n\mathbf{F}_n^T\mathbf{K}_n^{-1} & \mathbf{0}_n \\
        \mathbf{0}_n^T & 0
        \end{bmatrix} + \begin{bmatrix}
        \mathbf{K}_n^{-1}\mathbf{F}_n\left((\mathbf{F}_n^T\mathbf{K}_n^{-1}\mathbf{F}_n)^{-1} - \mathbf{B}_n\right)\mathbf{U}_n^T  \\
        \mathbf{0}_{n+1}^T
        \end{bmatrix} && \\\nonumber
        & ~ + \begin{bmatrix}
        \mathbf{K}_n^{-1}\mathbf{F}_n\left((\mathbf{F}_n^T\mathbf{K}_n^{-1}\mathbf{F}_n)^{-1} - \mathbf{B}_n\right)\mathbf{U}_n^T  \\
        \mathbf{0}_{n+1}^T
        \end{bmatrix}^T + \mathbf{U}_n\left((\mathbf{F}_n^T\mathbf{K}_n^{-1}\mathbf{F}_n)^{-1} - \mathbf{B}_n\right)\mathbf{U}_n^T &&
\end{flalign}

\begin{flalign}
\label{imspe_uk_e}
    \mathbf{P}_{n+1} = \begin{bmatrix}
        \mathbf{P}_n & \mathbf{0}_{p}
    \end{bmatrix} - \begin{bmatrix}
        \mathbf{B}_n\mathbf{F}_n^T\mathbf{K}_n^{-1} &
        \mathbf{0}_{p}
    \end{bmatrix} + \left((\mathbf{F}_n^T\mathbf{K}_n^{-1}\mathbf{F}_n)^{-1} - \mathbf{B}_n\right)\mathbf{U}_n^T&&
\end{flalign}

From all these derivations, and using the results from Lemma 1, it follows that

\begin{equation}
    \begin{split}
        I_{n+1}(\tilde{\mathbf{x}}, \tilde{t}) & = E - \mathrm{tr}(\mathbf{K}_{n+1}^{-1}\mathbf{W}_{n+1}) + \mathrm{tr}\left(\mathbf{M}_{n+1}\mathbf{W}_{n+1}\right) + \mathrm{tr}\left((\mathbf{F}_{n+1}^T\mathbf{K}_{n+1}^{-1}\mathbf{F}_{n+1})^{-1}\mathbf{G}\right) - 2.\mathrm{tr}(\mathbf{P}_{n+1}\mathbf{H}_{n+1}) \\
        & = E - \mathrm{tr}(\mathbf{K}_n^{-1}\mathbf{W}_n) + \mathrm{tr}(\mathbf{M}_n\mathbf{W}_n)+ \mathrm{tr}\left((\mathbf{F}_n^T\mathbf{K}_n^{-1}\mathbf{F}_n)^{-1}\mathbf{G}\right) - 2.\mathrm{tr}(\mathbf{P}_n\mathbf{H}_n) - R_{n+1}(\tilde{\mathbf{x}}, \tilde{t}) \\
        & = I_n - R_{n+1}(\tilde{\mathbf{x}}, \tilde{t})
    \end{split}
\end{equation}

To fully develop $R_{n+1}(\tilde{\mathbf{x}}, \tilde{t})$, we write it as a sum of 4 elements : $$R_{n+1}(\tilde{\mathbf{x}}, \tilde{t}) = R^{(1)}_{n+1}(\tilde{\mathbf{x}}, \tilde{t}) + R^{(2)}_{n+1}(\tilde{\mathbf{x}}, \tilde{t}) + R^{(3)}_{n+1}(\tilde{\mathbf{x}}, \tilde{t}) + R_{n+1}^{(4)}(\tilde{\mathbf{x}}, \tilde{t})$$
such that:
\begin{flalign*}
    R^{(1)}_{n+1}(\tilde{\mathbf{x}}, \tilde{t}) & = \mathrm{tr}(\mathbf{K}_{n+1}^{-1}\mathbf{W}_{n+1}) - \mathrm{tr}(\mathbf{K}_n^{-1}\mathbf{W}_n) && \\
    & = \tilde{\sigma}_n^2(\tilde{\mathbf{x}},\tilde{t})\mathbf{a}(\tilde{\mathbf{x}},\tilde{t})^T\mathbf{W}_n \mathbf{a}(\tilde{\mathbf{x}},\tilde{t}) + 2\mathbf{w}(\tilde{\mathbf{x}})^T\mathbf{a}(\tilde{\mathbf{x}},\tilde{t}) + \tilde{\sigma}_n^2(\tilde{\mathbf{x}},\tilde{t})^{-1}w(\tilde{\mathbf{x}},\tilde{\mathbf{x}}) &&
\end{flalign*}

\begin{flalign*}
    R^{(2)}_{n+1}(\tilde{\mathbf{x}}, \tilde{t}) & = \mathrm{tr}(\mathbf{M}_{n+1}\mathbf{W}_{n+1}) - \mathrm{tr}(\mathbf{M}_n\mathbf{W}_n)  && \\
    & = \mathrm{tr}\left(\mathbf{K}_n^{-1}\mathbf{F}_n\mathbf{B}_n\mathbf{F}_n^T\mathbf{K}_n^{-1}\mathbf{W}_n\right) - \mathrm{tr}\left((\mathbf{S}_n + 2\mathbf{K}_n^{-1}\mathbf{F}_n)\left((\mathbf{F}_n^T\mathbf{K}_n^{-1}\mathbf{F}_n)^{-1} - \mathbf{B}_n\right)\mathbf{S}_n^T\mathbf{W}_n\right) && \\
    & ~~~~ - 2.\mathbf{T}_n\left((\mathbf{F}_n^T\mathbf{K}_n^{-1}\mathbf{F}_n)^{-1} - \mathbf{B}_n\right)(\mathbf{S}_n^T + \mathbf{F}_n^T\mathbf{K}_n^{-1})\mathbf{w}(\tilde{\mathbf{x}}) - \mathbf{T}_n\left((\mathbf{F}_n^T\mathbf{K}_n^{-1}\mathbf{F}_n)^{-1} - \mathbf{B}_n\right)\mathbf{T}_n^T w(\tilde{\mathbf{x}},\tilde{\mathbf{x}}) &&
 \end{flalign*}

\begin{flalign*}
    R^{(3)}_{n+1}(\tilde{\mathbf{x}}, \tilde{t}) & = \mathrm{tr}\left((\mathbf{F}_{n+1}^T\mathbf{K}_{n+1}^{-1}\mathbf{F}_{n+1})^{-1}\mathbf{G}\right) - \mathrm{tr}\left((\mathbf{F}_n^T\mathbf{K}_n^{-1}\mathbf{F}_n)^{-1}\mathbf{G}\right) && \\
    & = \mathrm{tr}(\mathbf{B}_n\mathbf{G})
\end{flalign*}

\begin{flalign*}
    R^{(4)}_{n+1}(\tilde{\mathbf{x}}, \tilde{t}) & = -2\Big[\mathrm{tr}(\mathbf{P}_{n+1}\mathbf{H}_{n+1}) - \mathrm{tr}(\mathbf{P}_n\mathbf{H}_n)\Big] && \\
    & = -2\Big[\mathrm{tr}\left(\mathbf{B}_n\mathbf{F}_n^T\mathbf{K}_n^{-1}\mathbf{H}_n\right) \\ 
    & \qquad \qquad - \mathrm{tr}\left(\left((\mathbf{F}_n^T\mathbf{K}_n^{-1}\mathbf{F}_n)^{-1} - \mathbf{B}_n\right)\mathbf{S}_n^T\mathbf{H}_n\right) - \mathbf{h}(\tilde{\mathbf{x}})^T\left((\mathbf{F}_n^T\mathbf{K}_n^{-1}\mathbf{F}_n)^{-1} - \mathbf{B}_n\right)\mathbf{T}_n^T\Big]
\end{flalign*}

More importantly, we can observe that given that the hyper-parameters remain the same, $\mathbf{K}_n^{-1}$ and $\mathbf{W}_n$ do not need to be reevaluated, leading to significant improvements in terms of computational efficiency. The cost complexity is then only $\mathcal{O}(n^2)$, considering that for all the above calculations, we can use the faster form $\mathrm{tr}(AB) = \mathbf{1}^T(A \odot B) \mathbf{1}$, for $A,B$ symmetric matrices.

\section{Closed form expressions in Theorem 1}

The closed-form expressions for $\mathbf{W}_n$ can be found in \cite{binois2019replication}.\par
\vspace{10pt}
We will detail the expressions of $g_{i,j}$ and $h_j(\mathbf{x}_i)$ in the cases  $f_j(\mathbf{x}, t) = 1,~f_j(\mathbf{x}, t) = x_{k_j},~f_j(\mathbf{x}, t) = x_{k_j}^2$. If using orthogonal regression functions such as Legendre polynomials, one can simply use linear combinations of those results. We will use $\mathcal{D} = [0,1]^d$ for simplicity.\par

\subsection{Derivation of $h_j(\mathbf{x}_i)$}

We only consider here the Gaussian kernel, and the Mat\'ern kernel with $\nu \in \{0.5, 1.5, 2.5\}$. All these kernels are separable, such that $k((\mathbf{x}_1, \mathbf{0}), (\mathbf{x}_2, \mathbf{0})) = \sigma^2\prod_{r = 1}^d k_r((x_{1,r}, 0),(x_{2,r},0))$. Since we always set the fidelity parameter at 0, we simplify notation and write $k(\mathbf{x}_1, \mathbf{x}_2) = \sigma^2\prod_{r = 1}^d k_r(x_{1,r},x_{2,r})$.\par
\vspace{5pt}
We first define the following quantities, where $1 \leq r \leq d$,
\begin{flalign*}
    I^{(1)}_{i,r} & := \int_0^1 k_r(x_{i,r},x) \mathrm{d}x  && \\
    I^{(2)}_{i,r} & = \int_0^1 x k_r(x_{i,r},x) \mathrm{d}x && \\
    I^{(3)}_{i,r} & = \int_0^1 x^2 k_r(x_{i,r},x) \mathrm{d}x
\end{flalign*}

These define the component-wise integrals that form $h_j(\mathbf{x}_i)$. Using these, we can simply develop the integrals for different regression functions $f_j(\mathbf{x}, \mathbf{t})$. Since $\mathbf{t}$ is once again set at 0, we simplify it as $f_j(\mathbf{x})$.

\begin{itemize}
    \item For $f_j(\mathbf{x}) = 1$, and with $\mathbf{x}_i = (\mathbf{x}_{i,1},\cdots,\mathbf{x}_{i,d})^T$,
    \begin{flalign*}
        h_j(\mathbf{x}_i) & = \int_{\mathbf{x}\in \mathcal{D}} k(\mathbf{x}_i,\mathbf{x})f_j(\mathbf{x})d\mathbf{x} = \prod_{r=1}^d I^{(1)}_{i,r} &&
    \end{flalign*}

    \item For $f_j(\mathbf{x}) = x_{r_j}$,
    \begin{flalign*}
        h_j(\mathbf{x}_i) & = \int_{\mathbf{x}\in \mathcal{D}} k(\mathbf{x}_i,\mathbf{x})f_j(\mathbf{x})d\mathbf{x} = \sigma_1^2\left(\int_0^1 k_{r_j}(x_{i,r_j}, x) x \mathrm{d}x\right)\left(\prod_{r\neq r_j} \int_0^1 k_r(x_{i,r}, x) \mathrm{d}x\right) && \\
        & = \sigma_1^2 I^{(2)}_{i,r_j}\left(\prod_{r\neq r_j} I^{(1)}_{i,r}\right)
    \end{flalign*}

    \item For $f_j(\mathbf{x}, t) = x_{r_j}^2$,
    \begin{flalign*}
        h_j(\mathbf{x}_i) & = \int_{\mathbf{x}\in \mathcal{D}} k(\mathbf{x}_i,\mathbf{x})f_j(\mathbf{x})d\mathbf{x} = \sigma_1^2\left(\int_0^1 k_{r_j}(x_{i,r_j}, x) x^2 \mathrm{d}x\right)\left(\prod_{r\neq r_j} \int_0^1 k_r(x_{i,r}, x) \mathrm{d}x\right) && \\
        & = \sigma_1^2 I^{(3)}_{i,r_j}\left(\prod_{r\neq r_j} I^{(1)}_{i,r}\right)
    \end{flalign*}
    
    \item For $f_j(\mathbf{x}, t) = x_{r_{j,1}} x_{r_{j,2}}$,
    \begin{flalign*}
        h_j(\mathbf{x}_i) & = \int_{\mathbf{x}\in \mathcal{D}} k(\mathbf{x}_i,\mathbf{x})f_j(\mathbf{x})d\mathbf{x} && \\
        & = \sigma_1^2\left(\int_0^1 k_{r_{j,1}}(x_{i,r_{j,1}}, x) x \mathrm{d}x\right) \left(\int_0^1 k_{r_{j,2}}(x_{i,r_{j,2}}, x) x \mathrm{d}x\right) \left(\prod_{r\notin \{r_{j,1},r_{j,2}\}} \int_0^1 k_r(x_{i,r}, x) \mathrm{d}x\right) && \\
        & = \sigma_1^2 I^{(2)}_{i,r_{j,1}} I^{(2)}_{i,r_{j,2}}\left(\prod_{r\notin \{r_{j,1},r_{j,2}\}} I^{(1)}_{i,r}\right)
    \end{flalign*}
\end{itemize}

We now need to obtain the closed form expressions for $I^{(1)}_{i,r},I^{(2)}_{i,r}, I^{(3)}_{i,r}$ for all the kernels considered in this work. The calculations mostly rely on integration by parts, as well as separating the domain in 2 parts ($x < x_{i,r}$ and $x_{i,r} < x$) for the Mat\'ern kernel.

\subsubsection{Gaussian kernel}

The closed-form expressions for the Gaussian kernel are 

\begin{flalign*}
    I^{(1)}_{i,r} & = \int_0^1 \mathrm{e}^{-\phi_r^2(x - x_{i,r})^2}\mathrm{d}x = \frac{\sqrt{\pi}}{\phi_r} \Big[\Phi(\sqrt{2}\phi_k(1-x_{i,r})) - \Phi(-\sqrt{2}\phi_r x_{i,r}) \Big] && \\
    I^{(2)}_{i,r} & = \int_0^1 x\mathrm{e}^{-\phi_{r}^2(x - x_{i,r})^2}\mathrm{d}x = \frac{1}{2\phi_r^2}\left(\mathrm{e}^{-\phi_{r}^2x_{i,r}^2} - \mathrm{e}^{-\phi_{r}^2(1-x_{i,r})^2}\right) + x_{i,r} I^{(1)}_{i,r} && \\
    I^{(3)}_{i,r} & = \int_0^1 x^2 \mathrm{e}^{-\phi_{r}^2(x - x_{i,r})^2}\mathrm{d}x && \\
    & = \frac{1}{2\phi_r^2}\big(-x_{i,r} \mathrm{e}^{-\phi_{r}^2 x_{i,r}^2} - (1 - x_{i,r}) \mathrm{e}^{-\phi_{r}^2 (1 - x_{i,r})^2}\big) + 2 x_{i,r}I^{(2)}_{i,r} + \Big(\frac{1}{2\phi_r^2} - x_{i,r}^2\Big)I^{(1)}_{i,r}
\end{flalign*}

\subsubsection{Mat\'ern kernel with $\nu = 0.5$}

The closed-form expressions for the Mat\'ern kernel with $\nu = 0.5$ are

\begin{flalign*}
    I^{(1)}_{i,r} & = \int^{x_{i,r}}_{0} \mathrm{e}^{-{\phi}_{r} \left|x_{i,r} - x\right|} \, \mathrm{d}x = \frac{2}{{\phi}_{r}} - \frac{\mathrm{e}^{-{\phi}_{r}}}{{\phi}_{r}} - \frac{\mathrm{e}^{\phi_{r}^{2}(1 - x_{i,r})}}{\phi_r} && \\
    I^{(2)}_{i,r} & = \int^{x_{i,r}}_{0} x\mathrm{e}^{-{\phi}_{r} \left|x_{i,r} - x\right|} \, \mathrm{d}x = \frac{1}{\phi_r}\Bigg(2x_{i,r} - \mathrm{e}^{-{\phi}_{r}(1 - x_{i,r})} + \frac{\mathrm{e}^{-{\phi}_{r} x_{i,r}}}{{\phi}_{r}} - \frac{ \mathrm{e}^{-{\phi}_{r}(1 - x_{i,r})}}{{\phi}_{r}} \Bigg)  && \\
    I^{(3)}_{i,r} & = \int^{x_{i,r}}_{0} x^2 \mathrm{e}^{-{\phi}_{r} \left|x_{i,r} - x\right|} \, \mathrm{d}x = \frac{1}{\phi_r} \Bigg(2 x^2_{i,r} - \mathrm{e}^{-\phi_r(1-x_{i,r})}  && \\
    & \qquad \qquad \qquad \qquad \qquad \qquad \qquad \qquad - \frac{2}{\phi_r} \Big(\mathrm{e}^{-\phi_r(1 - x_{i,r})}\Big(1 + \frac{1}{\phi_r}\Big)- \frac{2}{\phi_r} + \frac{\mathrm{e}^{-\phi_r x_{i,r}}}{\phi_r} \Big) \Bigg) 
\end{flalign*}

\subsubsection{Mat\'ern kernel with $\nu = 1.5$}

To reduce expression, we define

\begin{flalign*}
    A^{(1)}_{i,r} & := \int_0^{x_{i,r}} \mathrm{e}^{-\sqrt{3}\phi_r(x_{i,r}-x)} \mathrm{d}x = \frac{1}{\sqrt{3}\phi_r} - \frac{\mathrm{e}^{-\sqrt{3}\phi_rx_{i,r}}}{\sqrt{3}\phi_r} && \\
    A^{(2)}_{i,r} & := \int_0^{x_{i,r}} x \mathrm{e}^{-\sqrt{3}\phi_r(x_{i,r}-x)} \mathrm{d}x = \frac{x_{i,r}}{\sqrt{3}\phi_r} - \frac{A^{(1)}_{i,r}}{\sqrt{3}\phi_r} && \\
    A^{(3)}_{i,r} & := \int_0^{x_{i,r}} x^2 \mathrm{e}^{-\sqrt{3}\phi_r(x_{i,r}-x)} \mathrm{d}x = \frac{x_{i,r}^2}{\sqrt{3}\phi_r} - \frac{2A^{(2)}_{i,r}}{\sqrt{3}\phi_r}  && \\
    A^{(4)}_{i,r} & := \int_0^{x_{i,r}} x^3 \mathrm{e}^{-\sqrt{3}\phi_r(x_{i,r}-x)} \mathrm{d}x = \frac{x_{i,r}^3}{\sqrt{3}\phi_r} - \frac{3A^{(3)}_{i,r}}{\sqrt{3}\phi_r} 
\end{flalign*}

\begin{flalign*}
    B^{(1)}_{i,r} & := \int_{x_{i,r}}^1 \mathrm{e}^{-\sqrt{3}\phi_r(x-x_{i,r})} \mathrm{d}x = \frac{1}{\sqrt{3}\phi_r} - \frac{\mathrm{e}^{-\sqrt{3}\phi_r(1-x_{i,r})}}{\sqrt{3}\phi_r} && \\
    B^{(2)}_{i,r} & := \int_{x_{i,r}}^1 x \mathrm{e}^{-\sqrt{3}\phi_r(x-x_{i,r})} \mathrm{d}x = \frac{x_{i,r}}{\sqrt{3}\phi_r} - \frac{\mathrm{e}^{-\sqrt{3}\phi_r(1-x_{i,r})}}{\sqrt{3}\phi_r} + \frac{B^{(1)}_{i,r}}{\sqrt{3}\phi_r} && \\
    B^{(3)}_{i,r} & := \int_{x_{i,r}}^1 x^2 \mathrm{e}^{-\sqrt{3}\phi_r(x-x_{i,r})} \mathrm{d}x = \frac{x_{i,r}^2}{\sqrt{3}\phi_r} - \frac{\mathrm{e}^{-\sqrt{3}\phi_r(1-x_{i,r})}}{\sqrt{3}\phi_r} + \frac{2 B^{(2)}_{i,r}}{\sqrt{3}\phi_r} && \\
    B^{(4)}_{i,r} & := \int_{x_{i,r}}^1 x^3 \mathrm{e}^{-\sqrt{3}\phi_r(x-x_{i,r})} \mathrm{d}x = \frac{x_{i,r}^3}{\sqrt{3}\phi_r} - \frac{\mathrm{e}^{-\sqrt{3}\phi_r(1-x_{i,r})}}{\sqrt{3}\phi_r} + \frac{3 B^{(3)}_{i,r}}{\sqrt{3}\phi_r}
\end{flalign*}

We then obtain the following closed-form expressions

\begin{flalign*}
    I^{(1)}_{i,r} & = \int_0^1 (1 + \sqrt{3}\phi_r|x_{i,r}-x|)\mathrm{e}^{-\sqrt{3}{\phi}_{r} \left|x_{i,r} - x\right|} \mathrm{d}x && \\
    & = (1 + \sqrt{3}\phi_rx_{i,r})A^{(1)}_{i,r} - \sqrt{3}\phi_r A^{(2)}_{i,r} + (1 - \sqrt{3}\phi_rx_{i,r})B^{(1)}_{i,r} + \sqrt{3}\phi_r B^{(2)}_{i,r} && \\
    I^{(2)}_{i,r} & = \int_0^1 x \Big(1 + \sqrt{3}\phi_r|x_{i,r}-x|\Big)\mathrm{e}^{-\sqrt{3}{\phi}_{r} \left|x_{i,r} - x\right|} \mathrm{d}x && \\
    & = (1 + \sqrt{3}\phi_rx_{i,r})A^{(2)}_{i,r} - \sqrt{3}\phi_r A^{(3)}_{i,r} + (1 - \sqrt{3}\phi_rx_{i,r})B^{(2)}_{i,r} + \sqrt{3}\phi_r B^{(3)}_{i,r} && \\
    I^{(3)}_{i,r} & = \int_0^1 x^2 (1 + \sqrt{3}\phi_r|x_{i,r}-x|)\mathrm{e}^{-\sqrt{3}{\phi}_{r} \left|x_{i,r} - x\right|} \mathrm{d}x && \\
    & = (1 + \sqrt{3}\phi_rx_{i,r})A^{(3)}_{i,r} - \sqrt{3}\phi_r A^{(4)}_{i,r} + (1 - \sqrt{3}\phi_rx_{i,r})B^{(3)}_{i,r} + \sqrt{3}\phi_r B^{4}_{i,r}
\end{flalign*}

\subsubsection{Mat\'ern kernel with $\nu = 2.5$}

To reduce expression, we define

\begin{flalign*}
    A^{(1)}_{i,r} & := \int_0^{x_{i,r}} \mathrm{e}^{-\sqrt{5}\phi_r(x_{i,r}-x)} \mathrm{d}x = \frac{1}{\sqrt{5}\phi_r} - \frac{\mathrm{e}^{-\sqrt{5}\phi_rx_{i,r}}}{\sqrt{5}\phi_r} && \\
    A^{(2)}_{i,r} & := \int_0^{x_{i,r}} x \mathrm{e}^{-\sqrt{5}\phi_r(x_{i,r}-x)} \mathrm{d}x = \frac{x_{i,r}}{\sqrt{5}\phi_r} - \frac{A^{(1)}_{i,r}}{\sqrt{5}\phi_r} && \\
    A^{(3)}_{i,r} & := \int_0^{x_{i,r}} x^2 \mathrm{e}^{-\sqrt{5}\phi_r(x_{i,r}-x)} \mathrm{d}x = \frac{x_{i,r}^2}{\sqrt{5}\phi_r} - \frac{2A^{(2)}_{i,r}}{\sqrt{5}\phi_r} && \\
    A^{(4)}_{i,r} & := \int_0^{x_{i,r}} x^3 \mathrm{e}^{-\sqrt{5}\phi_r(x_{i,r}-x)} \mathrm{d}x = \frac{x_{i,r}^3}{\sqrt{5}\phi_r} - \frac{3A^{(3)}_{i,r}}{\sqrt{5}\phi_r} && \\
    A^{(5)}_{i,r} & := \int_0^{x_{i,r}} x^4 \mathrm{e}^{-\sqrt{5}\phi_r(x_{i,r}-x)} \mathrm{d}x = \frac{x_{i,r}^4}{\sqrt{5}\phi_r} - \frac{4A^{(4)}_{i,r}}{\sqrt{5}\phi_r} 
\end{flalign*}

\begin{flalign*}
    B^{(1)}_{i,r} & := \int_{x_{i,r}}^1 \mathrm{e}^{-\sqrt{5}\phi_r(x-x_{i,r})} \mathrm{d}x = \frac{1}{\sqrt{5}\phi_r} - \frac{\mathrm{e}^{-\sqrt{5}\phi_r(1-x_{i,r})}}{\sqrt{5}\phi_r} && \\
    B^{(2)}_{i,r} & := \int_{x_{i,r}}^1 x \mathrm{e}^{-\sqrt{5}\phi_r(x-x_{i,r})} \mathrm{d}x = \frac{x_{i,r}}{\sqrt{5}\phi_r} - \frac{\mathrm{e}^{-\sqrt{5}\phi_r(1-x_{i,r})}}{\sqrt{5}\phi_r} + \frac{B^{(1)}_{i,r}}{\sqrt{5}\phi_r} && \\
    B^{(3)}_{i,r} & := \int_{x_{i,r}}^1 x^2 \mathrm{e}^{-\sqrt{5}\phi_r(x-x_{i,r})} \mathrm{d}x = \frac{x_{i,r}^2}{\sqrt{5}\phi_r} - \frac{\mathrm{e}^{-\sqrt{5}\phi_r(1-x_{i,r})}}{\sqrt{5}\phi_r} + \frac{2 B^{(2)}_{i,r}}{\sqrt{5}\phi_r} && \\
    B^{(4)}_{i,r} & := \int_{x_{i,r}}^1 x^3 \mathrm{e}^{-\sqrt{5}\phi_r(x-x_{i,r})} \mathrm{d}x = \frac{x_{i,r}^3}{\sqrt{5}\phi_r} - \frac{\mathrm{e}^{-\sqrt{5}\phi_r(1-x_{i,r})}}{\sqrt{5}\phi_r} + \frac{3 B^{(3)}_{i,r}}{\sqrt{5}\phi_r} && \\
    B^{(5)}_{i,r} & := \int_{x_{i,r}}^1 x^4 \mathrm{e}^{-\sqrt{5}\phi_r(x-x_{i,r})} \mathrm{d}x = \frac{x_{i,r}^4}{\sqrt{5}\phi_r} - \frac{\mathrm{e}^{-\sqrt{5}\phi_r(1-x_{i,r})}}{\sqrt{5}\phi_r} + \frac{4 B^{(4)}_{i,r}}{\sqrt{5}\phi_r}
\end{flalign*}

We then obtain the following closed-form expressions

\begin{flalign*}
    I^{(1)}_{i,r} & = \int_0^1 \Big(1 + \sqrt{5}\phi_r|x_{i,r}-x| + \frac53 \phi_r^2(x_{i,r}-x)^2\Big)\mathrm{e}^{-\sqrt{5}{\phi}_{r} \left|x_{i,r} - x\right|} \mathrm{d}x && \\
    & = (1 + \sqrt{5}\phi_rx_{i,r})A^{(1)}_{i,r} - (\sqrt{5}\phi_r + \frac{10}{3}\phi_r^2 x_{i,r}) A^{(2)}_{i,r} + \frac53 \phi_r^2 A^{(3)}_{i,r} + (1 - \sqrt{5}\phi_rx_{i,r})B^{(1)}_{i,r} && \\
    & \qquad  \qquad \qquad \qquad \qquad \qquad \qquad \qquad \qquad \qquad \qquad+ (\sqrt{5}\phi_r - \frac{10}{3}\phi_r^2 x_{i,r}) B^{(2)}_{i,r} + \frac53 \phi_r^2 B^{(3)}_{i,r} && \\
    I^{(2)}_{i,r} & = \int_0^1 x \Big(1 + \sqrt{5}\phi_r|x_{i,r}-x| + \frac53 \phi_r^2(x_{i,r}-x)^2\Big)\mathrm{e}^{-\sqrt{5}{\phi}_{r} \left|x_{i,r} - x\right|} \mathrm{d}x && \\
    & = (1 + \sqrt{5}\phi_rx_{i,r})A^{(2)}_{i,r} - (\sqrt{5}\phi_r + \frac{10}{3}\phi_r^2 x_{i,r}) A^{(3)}_{i,r} + \frac53 \phi_r^2 A^{(4)}_{i,r} + (1 - \sqrt{5}\phi_rx_{i,r})B^{(2)}_{i,r} && \\
    & \qquad  \qquad \qquad \qquad \qquad \qquad \qquad \qquad \qquad \qquad \qquad+ (\sqrt{5}\phi_r - \frac{10}{3}\phi_r^2 x_{i,r}) B^{(3)}_{i,r} + \frac53 \phi_r^2 B^{(4)}_{i,r} && \\
    I^{(3)}_{i,r} & = \int_0^1 x^2 \Big(1 + \sqrt{5}\phi_r|x_{i,r}-x| + \frac53 \phi_r^2(x_{i,r}-x)^2\Big)\mathrm{e}^{-\sqrt{5}{\phi}_{r} \left|x_{i,r} - x\right|} \mathrm{d}x && \\
    & = (1 + \sqrt{5}\phi_rx_{i,r})A^{(3)}_{i,r} - (\sqrt{5}\phi_r + \frac{10}{3}\phi_r^2 x_{i,r}) A^{(4)}_{i,r} + \frac53 \phi_r^2 A^{(5)}_{i,r} + (1 - \sqrt{5}\phi_rx_{i,r})B^{(3)}_{i,r} && \\
    & \qquad  \qquad \qquad \qquad \qquad \qquad \qquad \qquad \qquad \qquad \qquad+ (\sqrt{5}\phi_r - \frac{10}{3}\phi_r^2 x_{i,r}) B^{(4)}_{i,r} + \frac53 \phi_r^2 B^{(5)}_{i,r}
\end{flalign*}

\subsubsection{Derivation of $g_{i,j}$}

We then derive the closed form expressions for $g_{i,j}$:
\begin{itemize}
    \item For $f_i(\mathbf{x}, t) = 1$ and $f_j(\mathbf{x}, t) = 1$, them $g_{i,j} = 1$
    \item For $f_i(\mathbf{x}, t) = 1$ and $f_j(\mathbf{x}, t) = x_{k_j}$, then $g_{i,j} = \frac12$
    \item For $f_i(\mathbf{x}, t) = 1$ and $f_j(\mathbf{x}, t) = x_{k_j}^2$, then $g_{i,j} = \frac13$
    \item For $f_i(\mathbf{x}, t) = x_{k_i}$ and $f_j(\mathbf{x}, t) = x_{k_j}$ with $i\neq j$, then $g_{i,j} = \frac14$
    \item For $f_i(\mathbf{x}, t) = x_{k_i}$ and $f_j(\mathbf{x}, t) = x_{k_j}$ with $i = j$, then $g_{i,j} = \frac13$
    \item For $f_i(\mathbf{x}, t) = x_{k_i}$ and $f_j(\mathbf{x}, t) = x_{k_j}^2$ with $i \neq j$, then $g_{i,j} = \frac16$
    \item For $f_i(\mathbf{x}, t) = x_{k_i}$ and $f_j(\mathbf{x}, t) = x_{k_j}^2$ with $i = j$, then $g_{i,j} = \frac14$
    \item For $f_i(\mathbf{x}, t) = x_{k_i}^2$ and $f_j(\mathbf{x}, t) = x_{k_j}^2$ with $i \neq j$, then $g_{i,j} = \frac19$
    \item For $f_i(\mathbf{x}, t) = x_{k_i}^2$ and $f_j(\mathbf{x}, t) = x_{k_j}^2$ with $i = j$, then $g_{i,j} = \frac15$
    \item For $f_i(\mathbf{x}, t) = x_{k_{i_1}}x_{k_{i_2}}$ and $f_j(\mathbf{x}, t) = 1$ with $i_1 \neq i_2$, then $g_{i,j} = \frac{1}{4}$
    \item For $f_i(\mathbf{x}, t) = x_{k_{i_1}}x_{k_{i_2}}$ and $f_j(\mathbf{x}, t) = x_{k_{j}}$ with $i_1,i_2,j$ all distinct, then $g_{i,j} = \frac{1}{8}$
    \item For $f_i(\mathbf{x}, t) = x_{k_{i_1}}x_{k_{i_2}}$ and $f_j(\mathbf{x}, t) = x_{k_{j}}$ with $i_1 \neq i_2 = j_1$, then $g_{i,j} = \frac{1}{6}$
    \item For $f_i(\mathbf{x}, t) = x_{k_{i_1}}x_{k_{i_2}}$ and $f_j(\mathbf{x}, t) = x_{k_{j}}^2$ with $i_1,i_2,j$ all distinct, then $g_{i,j} = \frac{1}{12}$
    \item For $f_i(\mathbf{x}, t) = x_{k_{i_1}}x_{k_{i_2}}$ and $f_j(\mathbf{x}, t) = x_{k_{j}}^2$ with $i_1 \neq i_2 = j_1$, then $g_{i,j} = \frac{1}{8}$
    \item For $f_i(\mathbf{x}, t) = x_{k_{i_1}}x_{k_{i_2}}$ and $f_j(\mathbf{x}, t) = x_{k_{j_1}}x_{k_{j_2}}$ with $i_1,i_2,j_1,j_2$ all distinct, then $g_{i,j} = \frac{1}{16}$
    \item For $f_i(\mathbf{x}, t) = x_{k_{i_1}}x_{k_{i_2}}$ and $f_j(\mathbf{x}, t) = x_{k_{j_1}}x_{k_{j_2}}$ with $i_1 \neq i_2 = j_1$, and $j_2\neq j_1$, then $g_{i,j} = \frac{1}{12}$
    \item For $f_i(\mathbf{x}, t) = x_{k_{i_1}}x_{k_{i_2}}$ and $f_j(\mathbf{x}, t) = x_{k_{j_1}}x_{k_{j_2}}$ with $j_1 = i_1 \neq i_2 = j_2$, then $g_{i,j} = \frac{1}{9}$
\end{itemize}

\section{Illustration of the active learning process}

We provide here two additional examples of the active learning process. The two figures (Figure S1 and S2) are generated in the same manner as Figure 2 in the main paper, although the values of $\gamma$ are changed. In Figure S1, $\gamma = 0.95$, and in Figure S2, $\gamma = 0.05$.

\begin{figure}[H]
    \centering
    \includegraphics[width=\linewidth]{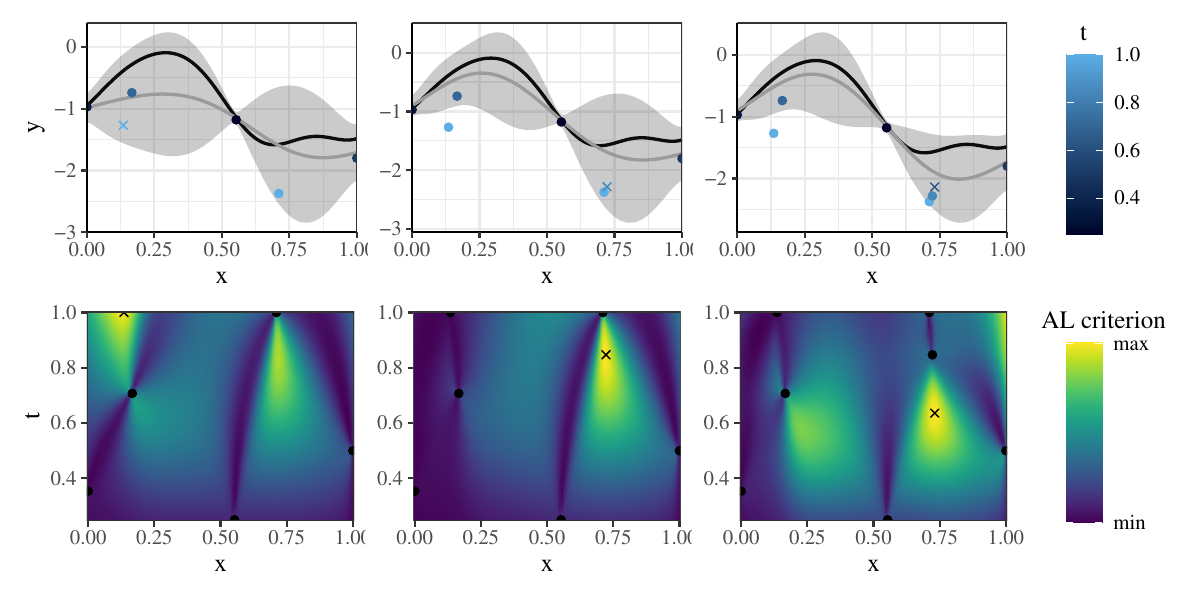}
    \caption{Prediction of our model (top), and active learning criterion surface (bottom). The points represent the current design locations $(\bullet)$, and the best next design point according to the criterion $(\times)$.}
    \label{fig:Al2}
\end{figure}

\begin{figure}[H]
    \centering
    \includegraphics[width=\linewidth]{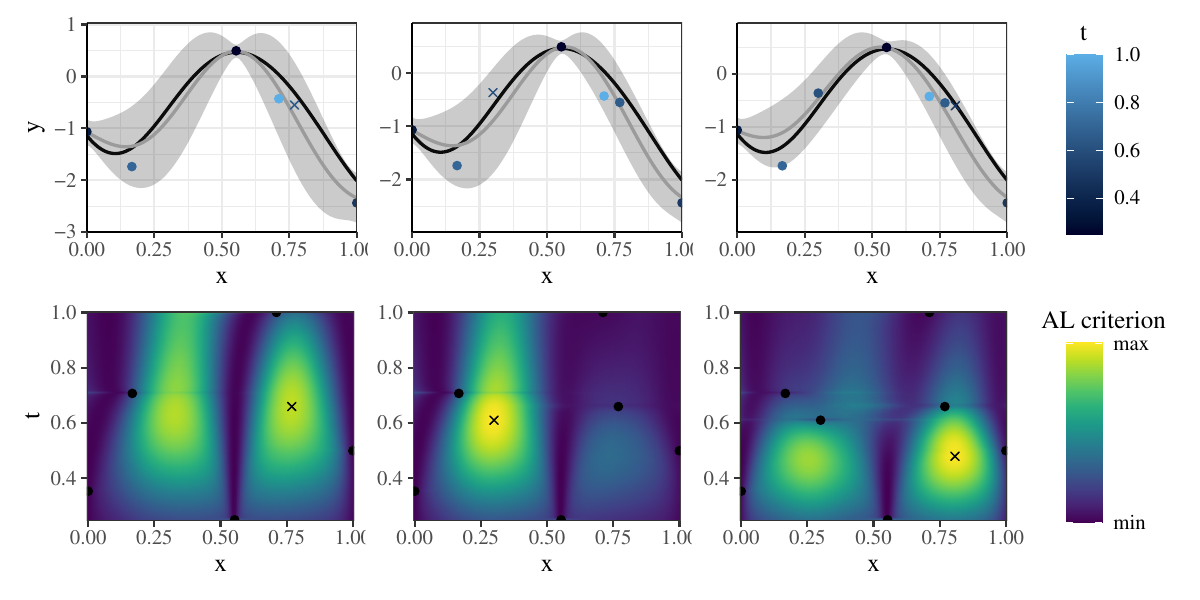}
    \caption{Prediction of our model (top), and active learning criterion surface (bottom). The points represent the current design locations $(\bullet)$, and the best next design point according to the criterion $(\times)$.}
    \label{fig:Al3}
\end{figure}

\section{Definition of RMSE}

For testing the results from a Gaussian process with posterior mean function $\mu$ and posterior standard deviation $\sigma$ against the true function $f$ and with testing data points $\mathbf{X}^{\mathrm{test}} = (\mathbf{x}_i^{\mathrm{test}})_{1\leq i \leq n_{\mathrm{test}}}$, we can define the RMSE as follows,

\begin{align}
    \mathrm{RMSE} & = \left(\sum_i^{n_{\mathrm{test}}} \frac{\left(f(\mathbf{x}_i^{\mathrm{test}}) - \mu(\mathbf{x}_i^{\mathrm{test}})\right)^2}{n_{\mathrm{test}}} \right)^{1/2} \nonumber \\
\end{align}

\section{About initial designs}

We introduce three designs to address these parameter estimation challenges, and compare them with two existing designs that focus on space-filling and projection properties: the Maximum projection design (\texttt{MaxPro}) by \cite{joseph2015maximum}, and the Multi-Mesh Experimental Design (\texttt{MMED}) by \cite{shaowu2023design}, an extension of the \texttt{MaxPro} design in which mesh size samples are chosen based on a weighted maximin criterion that accounts for computational cost.

\begin{figure}[h]
    \centering
    \includegraphics[width=\linewidth]{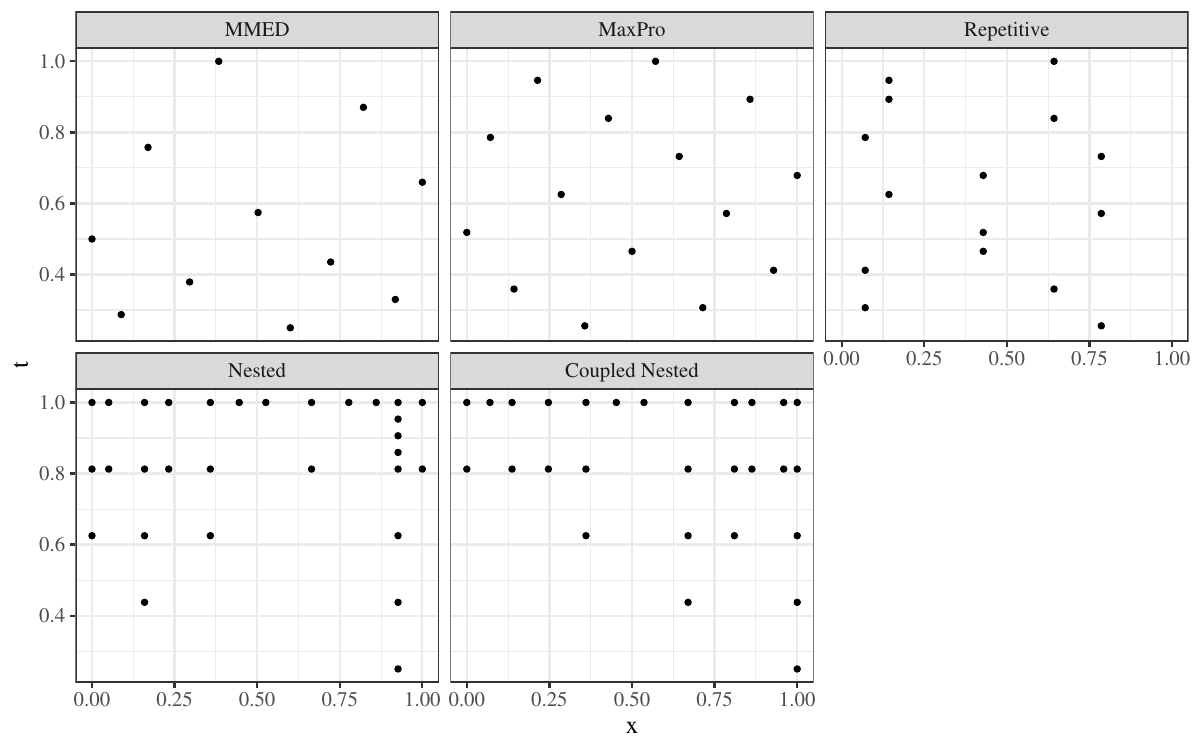}
    \caption{Examples of the different initial design.}
    \label{fig:initial_designs}
\end{figure}

The first experimental design that we introduce derives from the \texttt{MaxPro} design, where we add a constraint on the repetition of data points for each input space location, while retaining the criterion to have maximum projection on the tuning parameter space. We call this design ``Repetitive Design'' (\texttt{Repetitive}). The others two experimental designs are \textit{nested} designs \citep{qian2009nested}, meaning they consist of successive levels from low fidelity to high fidelity, with input space locations iteratively uniformly dropped at a rate of 1/2 at each fidelity level.  The lowest fidelity design is either a \texttt{MaxPro} or space-filling design, and the mesh sizes are evenly distributed across the mesh size space. We consider two variants of this design: a simple nested design (\texttt{Nested}), and another that includes an additional sequence of data points at high tuning parameter values for the same input point (\texttt{Coupled Nested}), which aims at obtaining a better resolution on the short-term behavior of the computer experiments on the tuning parameter space, and thus a more robust estimation of $\gamma$. All the designs are illustrated in Figure \ref{fig:initial_designs}.

Given that the parameter estimation is performed through MLE, using Fisher information provides a reliable method for evaluating parameter estimation performance. For a GP with constant mean and covariance kernel $K_{\boldsymbol{\theta}}$, the Fisher information for parameter $\theta_i$ takes the simple form

\begin{equation}
    F_{\theta_i} = \mathcal{I}_{i,i}(\boldsymbol{\theta}) = \frac12 \mathrm{tr}\Big(\mathbf{K}_{\boldsymbol{\theta}}^{-1}\frac{\partial \mathbf{K}_{\boldsymbol{\theta}}}{\partial \theta_i}\mathbf{K}_{\boldsymbol{\theta}}^{-1}\frac{\partial \mathbf{K}_{\boldsymbol{\theta}}}{\partial \theta_j}\Big)
\end{equation}

where we can use the results from Section B to obtain the closed form expressions of the Fisher information matrix components.\par

We can observe the Fisher information plots parameters $\phi_2^2$ and $\gamma$ across each of their plausible domains (Figure \ref{fig:fisher}). For each parameter, we study the Fisher information with 3 different underlying value of $\gamma$ or $\phi_2^2$, respectively, to observe the possible interaction between the parameter values and the Fisher information.

\begin{figure}[H]
    \centering
    \includegraphics[width=\linewidth]{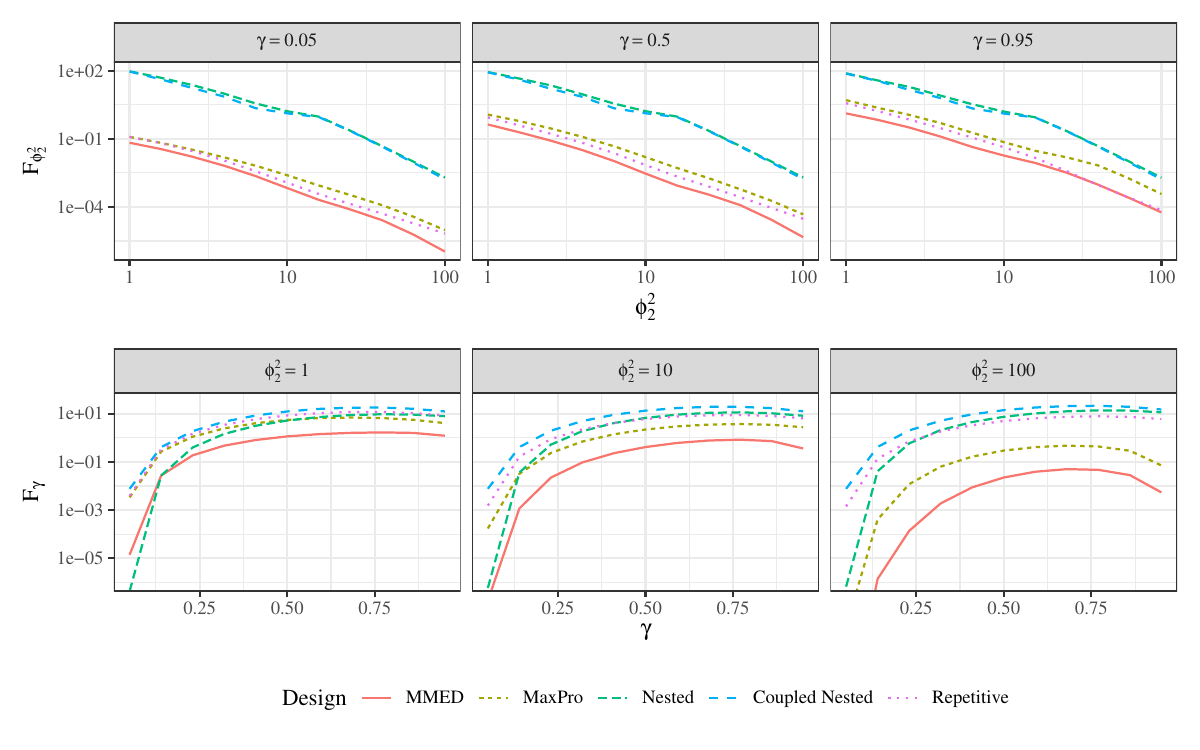}
    \caption{log of the Fisher information for parameters $\phi_2^2$ and $\gamma$ under different $\phi_2^2$ and $\gamma$ values.}
    \label{fig:fisher}
\end{figure}

The \texttt{Coupled Nested} design seems to have the most robust parameter estimation when considering both the Fisher information results for the estimation of $\phi_2^2$ and $\gamma$. It performs all other designs with regards to $\gamma$ while being far superior to all designs except the\texttt{Nested} design with regards to $\phi_2^2$. When considering the impact of the value of $\gamma$ on estimating $\phi_2^2$, we observe that changes $\gamma$ have almost no impact on the Fisher information for the nested designs while significantly impacting the Fisher information of the other design. This observation also applies to the estimation of $\gamma$ when considering different values of $\phi_2^2$. This superior robustness of the nested designs in the parameter estimation stems from the alignment of the data points when projected onto either the input space, or the tuning parameter space. Conceptually, sample paths for a single tuning parameter will become more wiggly as $\phi_2^2$ increases, decreasing the correlation with neighboring points and making the estimation of $\phi_2^2$ harder. This situation deteriorates further for non-nested designs as $\gamma$ decreases since data points that are close on the input space but not on the tuning parameter space become less correlated and share less information as $\gamma$ decreases.\par

We then corroborate the results found in the Fisher information study by constructing 10 datasets generated from our surrogate model for each parameter combination. The initial designs are constructed such that they all incur the same cost, using the cost function $C(t) = t^{-2}$. The results of these two studies are presented (Figures \ref{fig:boxplot_phi2sq_initdes} and \ref{fig:boxplot_gamma_initdes}).

\begin{figure}[H]
    \centering
    \includegraphics[width=\linewidth]{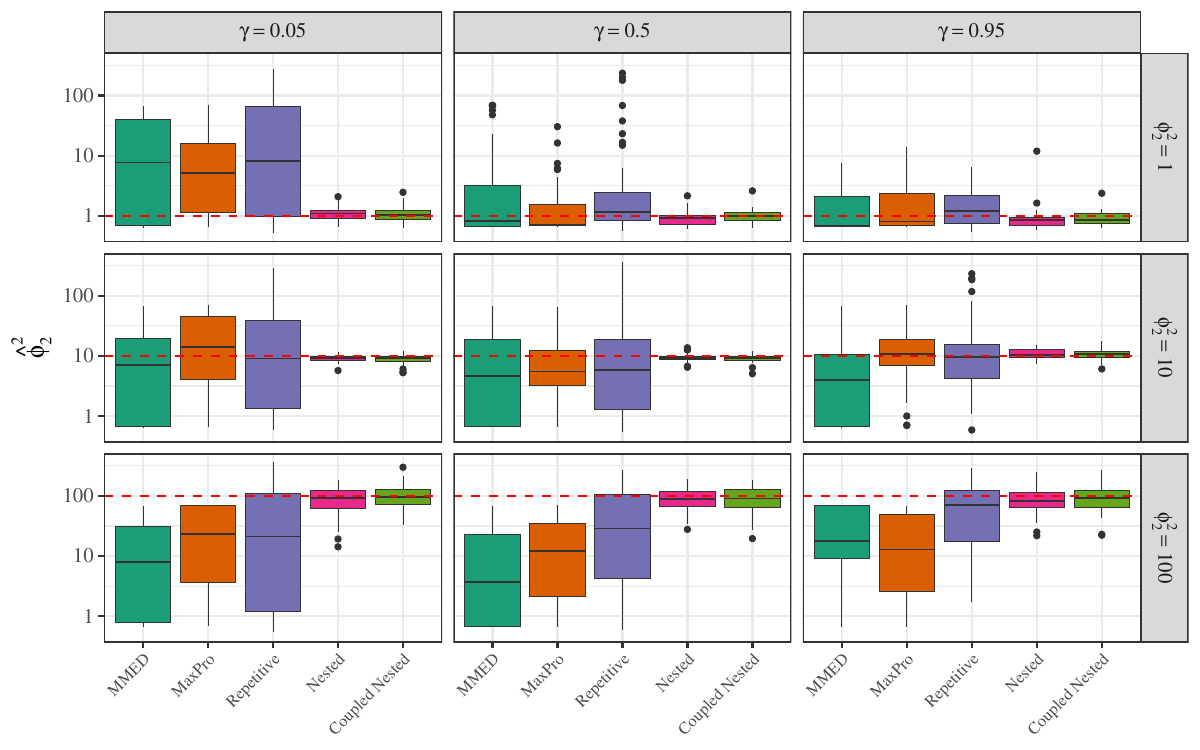}
    \caption{Spread of final estimation of $\phi_2^2$ across 50 repetitions for each combination of true values $\phi_2^2$ and $\gamma$, with respect to the initial design. The true value for each case is indicated by the red horizontal dashed line.}
    \label{fig:boxplot_phi2sq_initdes}
\end{figure}

The results from the simulation studies confirm the findings from the Fisher information study. The nested designs offer superior parameter estimation performance for $\phi_2^2$, while the \texttt{Coupled Nested} design perform better then all the other designs for the estimation of $\gamma$, although the \texttt{Repetitive} design remains competitive in this case. We also observe that the nested design give more consistent parameter estimation as the other parameters change (observing that the performance does not significantly change along different columns) whereas the other designs give significantly different estimation performance as the other parameters change.\par

\begin{figure}[H]
    \centering
    \includegraphics[width=\linewidth]{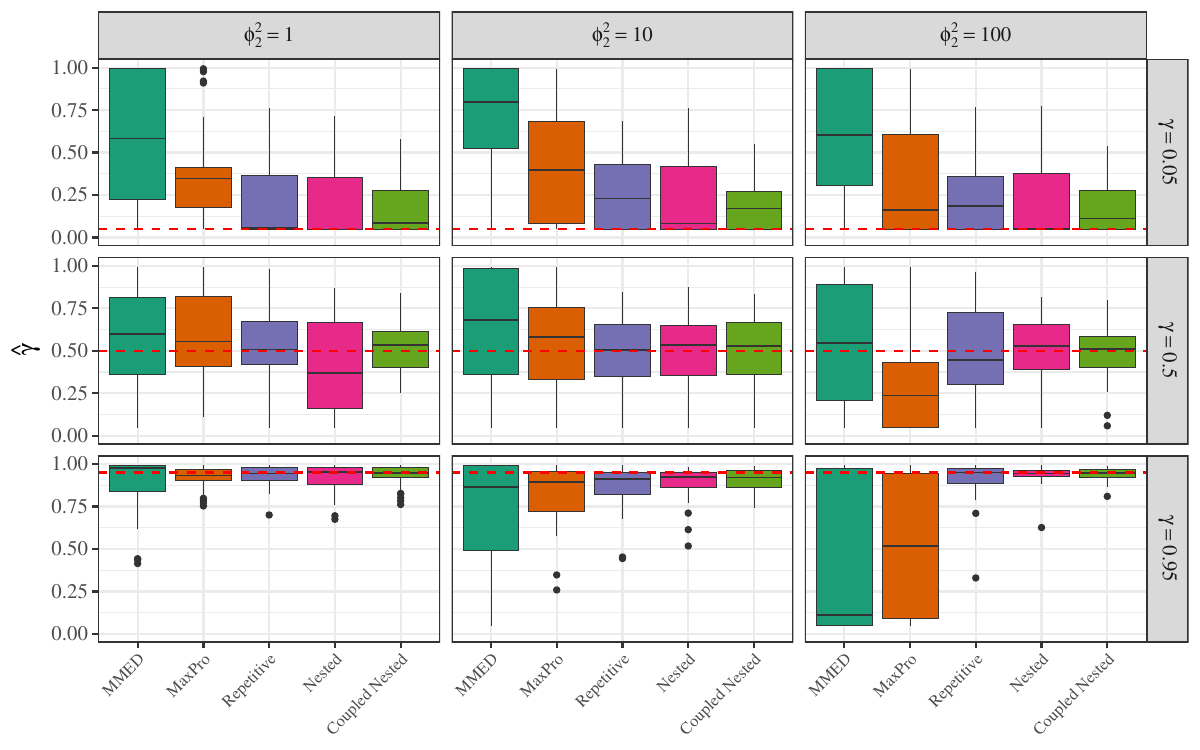}
    \caption{Spread of final estimation of $\gamma$ across 50 repetitions for each combination of true values $\phi_2^2$ and $\gamma$, with respect to the initial design. The true value for each case is indicated by the red horizontal dashed line.}
    \label{fig:boxplot_gamma_initdes}
\end{figure}

Those studies indicate that the nested designs, and in particular the \texttt{Coupled Nested} design, offer better parameter estimation for our multi-fidelity surrogate model. Although important, these findings must be taken further to evaluate this improved parameter estimation can lead to better prediction performances when coupled with our active learning method. 

\begin{figure}[H]
    \centering
    \includegraphics[width=\linewidth]{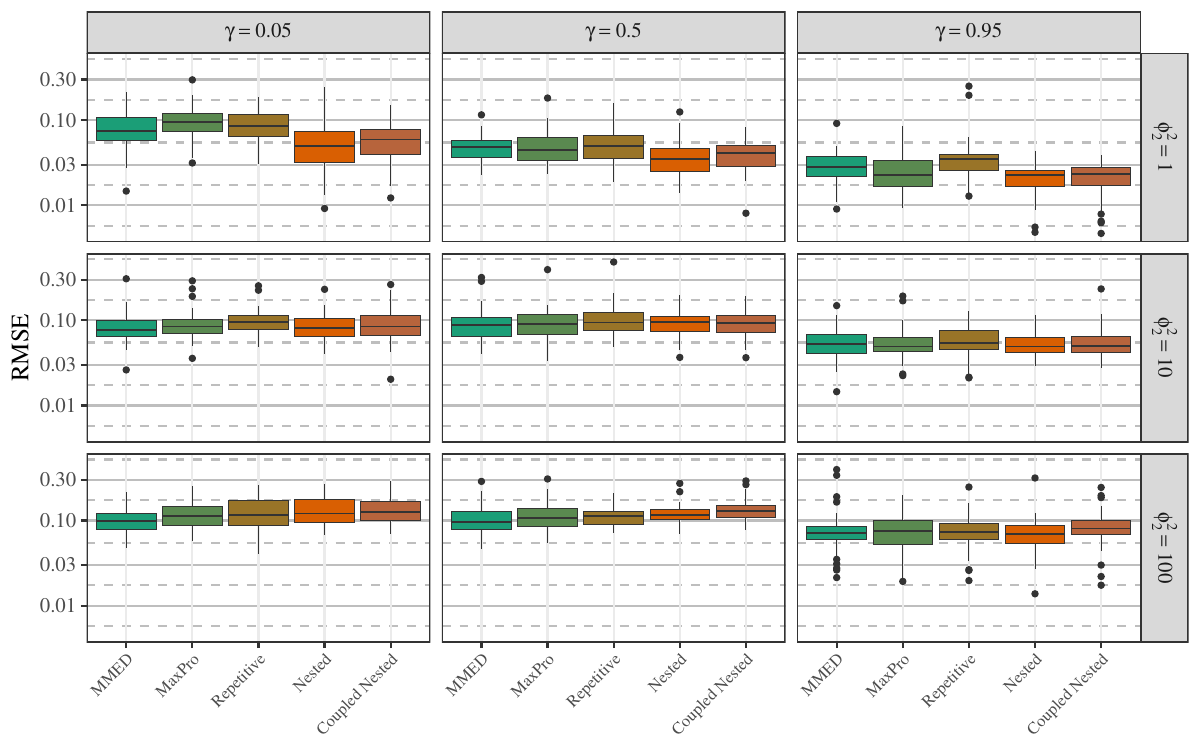}
    \caption{Spread of the final RMSE from the numerical study for each combination of true hyper-parameters and across 50 repetitions for each combination. The labels indicate the initial design used in the active leaning method.}
    \label{fig:sim_rmse_initdes}
\end{figure}

Using the 5 designs as initial designs for our active learning method, we perform the same experiment as described in Section 4 of the main manuscript, and present the results in Figure \ref{fig:sim_rmse_initdes}. We observe that the resulting prediction performance of the active learning method is not significantly affected by the choice of initial design. Indeed, although the nested designs provide more robust parameter estimation early on, their weaker space-filling property when considering its projection on the input space may hinder predictive performance. However, these effects appear to balance out during the active learning process, as we do not observe a significant difference in final predictive performance across different initial designs.





\bibliography{ref}